\newcommand{\nc}{\newcommand}
\nc\cD{\mathcal{D}}
\nc\eps{\epsilon}
\nc{\ra}{\rightarrow}
\nc{\eq}[1]{(\ref{eq:#1})}
\nc{\eqs}[2]{(\ref{eq:#1}) and (\ref{eq:#2})}
\nc{\ot}{\otimes}
\newcommand{\C}{\mathbb{C}}
\newcommand{\Z}{\mathbb{Z}}
\newcommand{\E}{\mathbb{E}}
\newcommand{\F}{\mathbb{F}}
\newcommand{\ket}[1]{| #1 \rangle}
\newcommand{\bra}[1]{\langle #1|}
\newcommand{\ip}[2]{\langle #1|#2 \rangle}
\newcommand{\hsip}[2]{\langle #1,#2 \rangle}
\newcommand{\proj}[1]{| #1 \rangle \langle #1 |}
\newcommand{\bracket}[3]{\langle #1|#2|#3 \rangle}
\renewcommand{\Re}{\operatorname{Re}}
\DeclareMathOperator{\tr}{tr}
\newcommand{\be}{\begin{equation}}
\newcommand{\ee}{\end{equation}}
\newcommand{\bea}{\begin{eqnarray}}
\newcommand{\eea}{\end{eqnarray}}
\newcommand{\bes}{\begin{equation*}}
\newcommand{\ees}{\end{equation*}}
\newcommand{\beas}{\begin{eqnarray*}}
\newcommand{\eeas}{\end{eqnarray*}}
\newtheorem*{rep@theorem}{\rep@title}
\newcommand{\newreptheorem}[2]{%
\newenvironment{rep#1}[1]{%
 \def\rep@title{#2 \ref{##1} (restated)}%
 \begin{rep@theorem}}%
 {\end{rep@theorem}}}
\newtheorem{thm}{Theorem}
\newtheorem*{thm*}{Theorem}
\newtheorem{cor}[thm]{Corollary}
\newtheorem{lem}[thm]{Lemma}
\newtheorem*{lem*}{Lemma}
\newtheorem{prop}[thm]{Proposition}
\newcommand{\boxalgm}[3]{
\renewcommand{\figurename}{Algorithm}
\begin{figure}[t]
\noindent \framebox{
\begin{minipage}{0.96\textwidth}
#3
\end{minipage}
}
\caption{#2}
\label{#1}
\end{figure}
\renewcommand{\figurename}{Figure}
}
\newcommand{\thmref}[1]{Theorem~\ref{thm:#1}}
\begin{document}

\title{Sequential measurements, disturbance and property testing}
\author{Aram W. Harrow\thanks{Center for Theoretical Physics, Massachusetts Institute of Technology, USA; {\tt aram@mit.edu}}, \ Cedric Yen-Yu Lin\thanks{Joint Center for Quantum Information and Computer Science, University of Maryland, College Park, MD, USA; {\tt cedricl@umiacs.umd.edu}} \ and Ashley Montanaro\thanks{School of Mathematics, University of Bristol, UK; {\tt ashley.montanaro@bristol.ac.uk}.}}
\maketitle

\begin{abstract}
We describe two procedures which, given access to one copy of a quantum state and a sequence of two-outcome measurements, can distinguish between the case that at least one of the measurements accepts the state with high probability, and the case that all of the measurements have low probability of acceptance. The measurements cannot simply be tried in sequence, because early measurements may disturb the state being tested. One procedure is based on a variant of Marriott-Watrous amplification. The other procedure is based on the use of a test for this disturbance, which is applied with low probability.
We find a number of applications:
\begin{itemize}
\item Quantum query complexity separations in the property testing model for testing isomorphism of functions under group actions. We give quantum algorithms for testing isomorphism, linear isomorphism and affine isomorphism of boolean functions which use exponentially fewer queries than is possible classically, and a quantum algorithm for testing graph isomorphism which uses polynomially fewer queries than the best algorithm known.

\item Testing properties of quantum states and operations. We show that any finite property of quantum states can be tested using a number of copies of the state which is logarithmic in the size of the property, and give a test for genuine multipartite entanglement of states of $n$ qubits that uses $O(n)$ copies of the state. We also show that equivalence of two unitary operations under conjugation by a unitary picked from a fixed set can be tested efficiently. This is a natural quantum generalisation of testing isomorphism of boolean functions.

\item Correcting an error in a result of Aaronson on de-Merlinizing quantum protocols. This result claimed that, in any one-way quantum communication protocol where two parties are assisted by an all-powerful but untrusted third party, the third party can be removed with only a modest increase in the communication cost. We give a corrected proof of a key technical lemma required for Aaronson's result.
\end{itemize}
\end{abstract}


\section{Introduction}

In quantum mechanics, measuring the state of a system generally disturbs it. This effect can be harnessed for practical purposes -- such as in the field of quantum cryptography~\cite{gisin02} -- but is often an annoyance. In particular, the following question is nontrivial: Given one copy of a state $\ket{\psi}$ and a set of $n$ measurements, determine whether at least one of the measurements would accept $\ket{\psi}$ with high probability. If one simply tries the measurements one after the other, early measurements may disturb $\ket{\psi}$, leading later measurements to give an incorrect outcome.

A vivid example of this phenomenon is the quantum anti-Zeno effect~\cite{aharonov80,schieve89,kaulakys97}. Just as the more familiar quantum Zeno effect~\cite{misra77} describes a situation where the time-evolution of a quantum system is halted by frequent measurements, the anti-Zeno effect describes a situation where time-evolution is {\em caused} by measurement\footnote{One could also think of this effect as the standard Zeno effect in a rotating reference frame.}. For example, consider the sequence $M_1,\dots, M_n$ of 2-outcome measurements of one qubit such that the $k$'th measurement is specified by the pair $\{ I -\proj{\psi_k}, \proj{\psi_k} \}$, where we associate the first outcome with acceptance, the second with rejection, and
\[ \ket{\psi_k} = \cos\left(\frac{\pi k}{2n}\right) \ket{0} + \sin\left(\frac{\pi k}{2n}\right) \ket{1}. \]
Then, if we have the state $\ket{\psi_k}$ and apply the measurement $M_{k+1}$, the probability of rejection is
\beas
|\ip{\psi_k}{\psi_{k+1}}|^2 &=& \bigg(\cos\left( \frac{\pi k }{2n}\right) \cos\left(\frac{\pi (k+1)}{2n}\right) 
+ \sin\left(\frac{ \pi k }{2n}\right) \sin\left(\frac{\pi (k+1)}{2n}\right)\bigg)^2\\
&=& \cos\left(\frac{\pi}{2n}\right)^2\\
&=& 1 - O(1/n^2)
\eeas
and the residual state following rejection is $\ket{\psi_{k+1}}$. Therefore, if each of the $n$ measurements is performed in order, starting with the initial state $\ket{0} = \ket{\psi_0}$, the probability of ever seeing an ``accept'' measurement outcome is $O(1/n)$. However, the final state is $\ket{\psi_n} = \ket{1}$, which is orthogonal to the initial state; and indeed if the final measurement $M_n$ were performed on $\ket{0}$, it would accept with certainty.

Here we describe quantum procedures that combat this effect:

\begin{thm}\label{thm:informal}
Let $\Lambda_1,\dots,\Lambda_n$ be a sequence of projective measurement operators, where $\Lambda_i$ corresponds to the two-outcome measurement $M_i = \{ \Lambda_i, I - \Lambda_i \}$. Let $\rho$ be a state such that either there exists $i$ such that $\tr \Lambda_i \rho = \Omega(1)$, or $\E_j[\tr \Lambda_j \rho] = o(1/n)$. Then there is a test that uses one copy of $\rho$ and: in the first case, accepts with probability $\Omega(1)$; in the second case, accepts with probability $o(1)$.
\end{thm}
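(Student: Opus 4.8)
The plan is to reduce the problem to distinguishing two nearby acceptance probabilities and then to amplify the gap using a single copy of $\rho$, since the obstacle is precisely that we cannot draw fresh copies to repeat a naive test. First I would replace the family $\{M_i\}$ by the single randomized measurement ``pick $j$ uniformly at random and apply $M_j$.'' Applied to $\rho$, this accepts with probability $p := \E_j[\tr \Lambda_j \rho]$. In the first (``yes'') case $\max_i \tr \Lambda_i \rho = \Omega(1)$ forces $p = \Omega(1/n)$, while in the second (``no'') case $p = o(1/n)$; so the two cases are separated by a large (indeed $\omega(1)$) multiplicative factor, but both probabilities are of order $1/n$ and a single trial cannot tell them apart.

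\paragraph{}
To implement this coherently on one copy, introduce an index register and write $\ket\phi = n^{-1/2}\sum_{j=1}^n \ket j$. Define two projective measurements on the joint system: the ``reset'' projector $Q = \proj{\phi} \otimes I$ and the controlled verification $\Lambda = \sum_j \proj{j} \otimes \Lambda_j$. Starting from $\proj{\phi}\otimes\rho$, which lies in the support of $Q$, a single $\Lambda$-measurement accepts with probability exactly $p$. I would then run Marriott--Watrous amplification on the pair $(Q,\Lambda)$: alternately measure $Q,\Lambda,Q,\Lambda,\dots$ for $T$ rounds and form an estimator of $p$ from the outcome statistics (for instance the empirical frequency of ``accept'' outcomes, or the number of sign changes in the outcome sequence). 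By Jordan's lemma the joint action of $Q$ and $\Lambda$ block-diagonalizes into one- and two-dimensional invariant subspaces, so within each block the process is a two-state classical random walk governed by a single overlap $\cos^2\theta_s$; averaging over blocks, the estimator concentrates around $p$. Choosing $T = \poly(n)$ rounds gives additive precision $o(1/n)$, and the test accepts iff the estimate exceeds a threshold placed between the two cases, yielding acceptance probability $\Omega(1)$ in the yes case and $o(1)$ in the no case, using only the one copy of $\rho$ loaded into the register.

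\paragraph{}
The main obstacle is that standard Marriott--Watrous amplification is usually stated for a constant promise gap, whereas here both acceptance probabilities sit at scale $1/n$ and are separated only by a (possibly constant) factor. I therefore need a fine-grained version that estimates a small probability $p \sim 1/n$ to additive error $o(1/n)$, and the crux is to verify that the Jordan-block random walk produces an estimator whose bias and variance are controlled well enough that $\poly(n)$ rounds suffice despite the measurement back-action at each step. A Chernoff-type heuristic (variance $\sim p$, gap $\sim 1/n$, hence $\sim n$ effectively-independent samples) suggests $T = \tilde O(n)$ rounds should be enough, but making this rigorous through the non-independent, disturbance-laden walk is the delicate part.

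\paragraph{}
Finally, I would note an alternative route that avoids encoding the index: run $M_1,\dots,M_n$ in sequence but, at each step, with small probability perform a disturbance test instead of the measurement. The dichotomy to exploit is that in the yes case either the sequential run catches an accepting outcome, or the cumulative disturbance of the rejected measurements is large and is flagged by the test. The difficulty specific to this route is that, as the anti-Zeno example shows, total disturbance is not controlled by the total acceptance probability (per-step disturbance scales like $\sqrt{p_i}$, which can sum to $\Omega(1)$ while $\sum_i p_i = o(1)$), and with only one copy there is no reference state against which to gauge disturbance directly; designing a one-copy disturbance test and bounding its false-accept rate in the no case is the corresponding obstacle.
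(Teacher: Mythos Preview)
Your framework matches the paper's: implement the averaged POVM $\Lambda = \frac{1}{n}\sum_j \Lambda_j$ coherently via an index register and run a Marriott--Watrous alternation between the reset projector and the controlled verification. The gap is in the analysis. The empirical acceptance frequency over $T$ rounds does \emph{not} concentrate around $p = \tr\Lambda\rho$. Decompose the initial state over eigenvectors $\ket{\psi_s}$ of $\Lambda$ with eigenvalues $\lambda_s$; the alternation preserves each Jordan block, so the outcome statistics within block $s$ estimate $\lambda_s$, and the overall estimator has (approximately) the mixture law $\sum_s |\alpha_s|^2\,\delta_{\lambda_s}$, not a point mass near the average $p = \sum_s |\alpha_s|^2\lambda_s$. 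In the yes case, $p = \Omega(1/n)$ is perfectly compatible with most of the weight sitting on $\lambda_s \ll 1/n$ and only a small fraction on a large eigenvalue, and then your thresholded estimator would usually reject.

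The missing step is to show this cannot happen: one must prove that $\max_i \tr\Lambda_i\rho = \Omega(1)$ forces $\rho$ to have $\Omega(1)$ overlap with the span of eigenvectors of $\Lambda$ with $\lambda_s \geq c/n$. The paper does this via the gentle-measurement lemma: if $\sigma$ is the normalised projection of $\rho$ onto the low-eigenvalue subspace, then $\tr\Lambda_i\sigma \leq n\tr\Lambda\sigma < c$, so $\|\rho-\sigma\|_{\operatorname{tr}} = \Omega(1)$, and gentle measurement converts this into a lower bound on the weight $\rho$ places outside that subspace. With this in hand the paper also bypasses estimation entirely: it accepts the first time $\Pi$ accepts or the reset projector $\Delta$ fails, and reads off $(1-e^{-1})\tr P_{\geq 1/(2N)}\rho \leq p_{\mathrm{acc}}(N) \leq 2N\tr\Lambda\rho$ directly from the eigendecomposition, so $N = O(n)$ rounds suffice with no Chernoff-type argument needed. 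Your disturbance-test alternative is exactly the paper's second procedure; the ``no reference state'' worry is resolved with a control qubit: start from $\ket{+}\otimes\rho$, apply each $\Lambda_j$ controlled on $\ket 1$, and test disturbance by measuring the control in the Hadamard basis.
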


\thmref{informal} is restated more precisely below as Corollary \ref{cor:seqmeasproptest}.  A variant of this result was called a ``quantum OR bound'' by Aaronson~\cite{aaronson06}, but the proof there was incorrect because it neglected the effects of disturbance in the ``accept'' case.  The term ``OR bound'' refers to the fact that we would like to design a measurement that accepts if any one of the $M_i$ are likely to accept.

We give two procedures for \thmref{informal}. One procedure performs a crude form of eigenvalue estimation on the POVM operator $\frac{1}{n} \sum_j \Lambda_j$. In the first case $\rho$ has constant overlap with the +1-eigenspace of $\Lambda_i$, and therefore has constant overlap with the space of eigenvectors of $\frac{1}{n} \sum_j \Lambda_j$ with eigenvalue at least $\Omega(1/n)$. We employ a variant of the Marriott-Watrous gap amplification procedure~\cite{marriott05} to determine when this is the case. The second procedure is based on repeatedly selecting a random measurement from the sequence to perform, but also incorporating a ``disturbance test'', which is performed with low probability. This allows us to detect whether the residual state of the system after some measurements is far from its initial state. 

These procedures turn out to have a number of applications.  The common theme behind these applications is that we are given a large number of tests and promised that the maximum acceptance probability of these tests is either large or small.   For example, we might be given a class of symmetries and a state that is (a) either invariant under some symmetry or (b) far from invariant under all such symmetries.  In case (b) each symmetry test is unlikely to accept and therefore causes little damage.  However, if case (a) holds then besides the symmetry that the state exactly satisfies there may be many others where the symmetry test has any probability of accepting between 0 and 1.  Thus naive sequential measurement could substantially damage the state.  Our \thmref{informal} is designed precisely to address this possibility.  We now describe the applications in detail.

\subsection{Property testing}

The field of property testing aims to find super-efficient algorithms for determining whether a function has a certain property, given the promise that it either has the property, or is far from having the property. The goal is to achieve this with the minimal number of queries to the function. A number of properties are known where quantum testers outperform classical testers -- sometimes exponentially or super-exponentially. See~\cite{montanaro13c} for a review. 

Here we present query-efficient quantum testers for properties which can be expressed as isomorphism of functions under the action of a permutation group. Let $G$ be a nontrivial permutation group acting on a finite set $X$, and consider two functions $f,g:X \rightarrow Y$ for some finite set $Y$. We say that $f$ and $g$ are isomorphic if there exists $\sigma \in G$ such that $g(x) = f(\sigma(x))$ for all $x \in X$, and for conciseness sometimes write $g = f\circ\sigma$. On the other hand, we say that $f$ and $g$ are $\epsilon$-far from isomorphic if, for all $\sigma \in G$, $|\{x \in X: g(x) \neq f(\sigma(x))\}| \ge \epsilon |X|$. We say that an algorithm is an $\epsilon$-tester for the property of $G$-isomorphism if it distinguishes between these two cases with success probability at least $2/3$. This success probability can be improved to $1-\delta$, for any $\delta > 0$, by running the algorithm $O(\log 1/\delta)$ times and taking the majority vote. Below we discuss concrete examples of this abstract problem.

The problem of testing $G$-isomorphism was studied by Babai and Chakraborty~\cite{babai10}, who considered both the case where $g$ is known in advance (which they call the ``query-1'' case), and the case where both $f$ and $g$ are unknown (the ``query-2'' case). Here we only consider the case where $f$ and $g$ are both unknown. For this case, Babai and Chakraborty gave an almost tight classical bound of $\widetilde{\Theta}(\sqrt{|X| \log |G|})$ queries for testers with 1-sided error in the case where the action of $G$ is primitive, i.e.\ does not preserve a nontrivial partition of $X$, and a classical upper bound of $O(\sqrt{|X| \log |G|})$ queries which holds for all permutation groups $G$ and also has only 1-sided error. The best general lower bound they give for classical testers with 2-sided error is substantially weaker: $\Omega(\epsilon \log n)$ queries for any transitive group $G$ of order $2^{O(n^{1-\epsilon})}$. However, substantially stronger bounds can be shown for specific group actions, as discussed below.

Here we prove the following result:

\begin{thm}
\label{thm:giso}
For any set of permutations $G$, there is a quantum $\epsilon$-tester for $G$-isomorphism which makes $O((\log |G|) / \epsilon)$ queries.
\end{thm}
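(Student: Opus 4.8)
The plan is to reduce $G$-isomorphism testing to the quantum OR bound of \thmref{informal} (equivalently Corollary~\ref{cor:seqmeasproptest}), using one measurement $\Lambda_\sigma$ per permutation $\sigma \in G$, so that $n = |G|$. The first step is to encode each function as a quantum state. Using a single query to $f$ we can prepare the function state $\ket{\psi_f} = \frac{1}{\sqrt{|X|}} \sum_{x \in X} \ket{x}\ket{f(x)}$, and similarly $\ket{\psi_g}$ with one query to $g$. For a permutation $\sigma$ let $P_\sigma$ be the unitary that relabels the first register by $\ket{x} \mapsto \ket{\sigma(x)}$; this costs no queries. A short calculation gives $\ip{\psi_f}{P_\sigma \psi_g} = \frac{1}{|X|}|\{x : g(x) = f(\sigma(x))\}|$, which equals $1$ when $g = f \circ \sigma$ and is at most $1 - \epsilon$ for every $\sigma$ when $f$ and $g$ are $\epsilon$-far from isomorphic.

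Next I would turn this overlap into a two-outcome projective measurement and amplify the gap. A single swap test between $\ket{\psi_f}$ and $P_\sigma \ket{\psi_g}$ accepts with probability $\frac{1}{2}(1 + |\ip{\psi_f}{P_\sigma\psi_g}|^2)$, which is $1$ in the isomorphic case but only $\le 1 - \epsilon/2$ in the far case --- not yet small enough to feed into the OR bound, which needs the average acceptance to be $o(1/|G|)$. To fix this I take $t = O((\log |G|)/\epsilon)$ copies of each function state, so the prepared state is $\rho = (\proj{\psi_f})^{\ot t} \ot (\proj{\psi_g})^{\ot t}$, costing $2t = O((\log|G|)/\epsilon)$ queries. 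For each $\sigma$ define $\Lambda_\sigma$ to be the projector corresponding to applying $P_\sigma$ to each of the $t$ copies of $\ket{\psi_g}$ and then accepting only if all $t$ swap tests (pairing the $i$-th $f$-copy with the $i$-th transformed $g$-copy) accept. Conjugating the product of symmetric-subspace projectors by the unitaries $P_\sigma$ keeps $\Lambda_\sigma$ projective, as required by \thmref{informal}.

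Because $\rho$ factorises over the $t$ pairs, $\tr \Lambda_\sigma \rho = \left(\tfrac{1}{2}(1 + |\ip{\psi_f}{P_\sigma\psi_g}|^2)\right)^t$. In the isomorphic case this equals $1$ for the witnessing $\sigma$, so $\max_\sigma \tr\Lambda_\sigma\rho = \Omega(1)$; in the far case it is at most $(1-\epsilon/2)^t$ for every $\sigma$, and choosing the constant in $t$ appropriately makes this $o(1/|G|)$, hence $\E_\sigma[\tr\Lambda_\sigma\rho] = o(1/|G|)$. Applying \thmref{informal} with these $n = |G|$ projective measurements to the single copy of $\rho$ then accepts with probability $\Omega(1)$ in the isomorphic case and $o(1)$ in the far case, and standard constant-factor success amplification gives the claimed $\epsilon$-tester. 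Crucially, the OR-bound procedure reuses the already-prepared $\rho$ and the $\Lambda_\sigma$ are query-free (the $P_\sigma$ only permute basis labels), so the total query cost is just the $O((\log|G|)/\epsilon)$ spent preparing $\rho$. Note also that no group structure of $G$ is used, matching the statement for an arbitrary set of permutations.

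The main obstacle is the amplification step: a naive single symmetry test leaves an acceptance gap of only $\epsilon$, which cannot be distinguished through the OR bound, so the tensor-power construction with $t = \Theta((\log|G|)/\epsilon)$ copies is essential --- and it is exactly this choice of $t$ that produces the $(\log|G|)/\epsilon$ query bound. A secondary point requiring care is verifying that the amplified test $\Lambda_\sigma$ is genuinely a projective two-outcome measurement, since \thmref{informal} is stated for projective measurements; this follows from the swap test being projection onto the symmetric subspace together with unitarity of $P_\sigma$.
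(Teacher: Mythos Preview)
Your proof is correct and follows the same overall architecture as the paper: encode $f$ and $g$ as one-query function states, observe that the permutation action is query-free, amplify a per-$\sigma$ equality test by tensoring $t=\Theta((\log|G|)/\eps)$ copies so the far-case acceptance drops below $c/|G|$, and then invoke the quantum OR bound (Corollary~\ref{cor:seqmeasproptest}) over all $\sigma\in G$.

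The only substantive difference is the choice of per-$\sigma$ test. You keep $\ket{\psi_f}$ and $\ket{\psi_g}$ separate and use the swap test (projection onto the symmetric subspace, conjugated by $I\otimes P_\sigma$), whereas the paper first packages $f$ and $g$ into the single state $\ket{\psi}=\tfrac{1}{\sqrt 2}(\ket{0}\ket{f}+\ket{1}\ket{g})$ and uses a Hadamard test against the unitary $U'_\sigma$ that swaps the roles of $f,g$ while permuting by $\sigma$. The paper abstracts this step into Lemma~\ref{lem:eigentest} (testing whether $\ket\psi$ is a $+1$ eigenvector of one of $n$ given unitaries), which it then reuses verbatim for unitary $S$-isomorphism (Theorem~\ref{thm:uiso}); your swap-test route is slightly more direct for this particular theorem but does not immediately yield that reusable lemma. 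Both routes give the same $O((\log|G|)/\eps)$ query bound, and your check that $\Lambda_\sigma$ is projective (as the unitary conjugate of a tensor power of symmetric-subspace projectors) is exactly what is needed to apply Corollary~\ref{cor:seqmeasproptest}.
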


Observe that there is no dependence on $|X|$, unlike the best known classical upper bounds; and that going slightly beyond the usual definition of $G$-isomorphism~\cite{babai10}, Theorem \ref{thm:giso} holds when $G$ is an arbitrary subset of a permutation group, rather than needing to be itself a group. Theorem \ref{thm:giso} encompasses a number of special cases:

\begin{itemize}
\item {\bf Isomorphism of boolean functions.} Two boolean functions $f,g:\{0,1\}^n \rightarrow \{0,1\}$ are said to be isomorphic if there exists $\sigma \in S_n$ such that $g(x) = f(\sigma(x))$ for all $x \in \{0,1\}^n$, where $\sigma(x)_i = x_{\sigma(i)}$. It is known that testing isomorphism of two unknown boolean functions up to constant accuracy requires $\Omega(2^{n/2} / n^{1/4})$ classical queries, with a nearly-matching upper bound of $O(2^{n/2} \sqrt{n \log n})$ queries~\cite{alon13}. Theorem \ref{thm:giso} implies (taking $G = S_n$, $X = \{0,1\}^n$) that the quantum query complexity of $\epsilon$-testing isomorphism of boolean functions is $O((n \log n) / \epsilon)$, which is exponentially smaller.

\item {\bf Graph isomorphism.} Two graphs $G$ and $H$ on $n$ vertices are said to be isomorphic if there exists a bijection $\sigma$ between the vertices such that vertices $u$ and $v$ are adjacent in $G$ if and only if $\sigma(u)$ and $\sigma(v)$ are adjacent in $H$. The best classical query complexity known for testing graph isomorphism is $\widetilde{O}(n^{5/4})$~\cite{fischer08}, though the current best classical lower bound is only $\Omega(n)$~\cite{fischer08}. A quantum algorithm for testing graph isomorphism which makes $\widetilde{O}(n^{7/6})$ queries was given by Chakraborty et al.~\cite{chakraborty13a}, as well as a quantum lower bound of $\Omega(n^{1/3})$ queries. Theorem \ref{thm:giso} implies (taking $G = S_n$,  $X=[n]\times [n]$ and $Y=\{0,1\}$) 
 that the quantum query complexity of $\epsilon$-testing graph isomorphism is $O((n \log n) / \epsilon)$.

\item {\bf Linear and affine isomorphism of boolean functions.} Two boolean functions $f,g:\{0,1\}^n \rightarrow \{0,1\}$ are said to be linear-isomorphic if there exists a non-singular linear transformation $A \in GL_n(\F_2)$ such that $g(x) = f(Ax)$ for all $x \in \{0,1\}^n$. Testing linear isomorphism was studied by Grigorescu, Wimmer and Xie~\cite{grigorescu13}, who gave an $\Omega(n^2)$ lower bound for testing linear isormorphism to a fixed function. It is not difficult to prove a substantially stronger $\Omega(2^{n/2})$ classical lower bound for testing linear isomorphism between two unknown functions; the argument is closely related to the lower bound for Simon's problem (qv) and we include it in Appendix~\ref{app:linisobound}. 
Theorem \ref{thm:giso} implies (taking $G = GL_n(\F_2)$, $X = \{0,1\}^n$) that the quantum query complexity of $\epsilon$-testing linear isomorphism of boolean functions is $O(n^2 / \epsilon)$, which is exponentially smaller.

A similar $O(n^2 / \epsilon)$ quantum upper bound holds for testing affine isomorphism, i.e.\ the existence of a pair $A \in GL_n(\F_2)$, $b \in \F_2^n$ such that $g(x) = f(Ax + b)$ for all $x$. Here an $\Omega(2^{n/2})$ classical lower bound is immediate from the known $\Omega(2^{n/2})$ lower bound on the classical query complexity of the property-testing variant of Simon's problem~\cite{buhrman08}.

\item {\bf Hidden subgroup problems.} Imagine we have access to a function $f:G \rightarrow Y$ for some group $G$ and some set $Y$, and are promised that either $f$ is constant on cosets of a nontrivial subgroup $H \le G$, or $f$ is far from any such function. This is a property-testing version of the well-studied hidden subgroup problem~\cite{lomont04} for the group $G$. Take $X$ to be $G$, on which $G$ acts in the natural way. Then the pair of functions $(f,f)$ are $(G \setminus \{e\})$-isomorphic if and only if $f$ is constant on cosets of a nontrivial subgroup $H \le G$. Theorem \ref{thm:giso} implies that the query complexity of this problem is $O((\log |G|) / \epsilon)$. A query-efficient tester for this property with the same complexity parameters was previously described by Friedl et al.~\cite{friedl09} in the case where $H$ is promised to be a {\em normal} subgroup of $G$; in addition, their tester is time-efficient when $G$ is abelian. Note that the promise here is a bit different to the standard HSP. There, we are promised that $f$ is constant on cosets of a subgroup $H \le G$, and distinct on each coset. A quantum tester for this variant where $G = \Z_N$, corresponding to a periodicity determination problem, was given in~\cite{chakraborty13a}. In this case, the query complexity can be reduced to $O(1)$ if the additional constraint is imposed that the period of $f$ is $O(\sqrt{N})$.

It is known that the quantum query complexity of the property-testing version of Simon's problem, a special case of the hidden subgroup problem for $G = \Z_2^n$, obeys an $\Omega(\log |G|)$ lower bound~\cite{koiran05} (see also the discussion in~\cite{montanaro13c}). This implies that the dependence on $|G|$ of Theorem \ref{thm:giso} is optimal.
\end{itemize}

Following the completion of this work, Alexander Belov (personal communication) has notified us that Theorem \ref{thm:giso} can also be proven by a direct construction of a solution to the quantum adversary bound semidefinite program. 


\subsection{Testing quantum properties}

We can also apply the ideas underlying our tester to testing properties of quantum states and operations. The notion of testing properties of classical data can naturally be extended to pure quantum states (unit vectors in $\C^d$) as follows. Let $\mathcal{P}$ be a subset of $\C^d$. We are given the ability to produce copies of an initially unknown state $\ket{\psi}$, and asked to distinguish between the following two cases with success probability $2/3$: either $\ket{\psi} \in \mathcal{P}$, or $\inf_{\ket{\phi} \in \mathcal{P}} \| \psi - \phi \|_{\operatorname{tr}} \ge \epsilon$. Here $\| \cdot \|_{\operatorname{tr}}$ is the trace distance,
\be \label{eq:td} \| \proj{\psi} - \proj{\phi}  \|_{\operatorname{tr}} = \frac{1}{2} \tr |\proj{\psi} - \proj{\phi}| = \sqrt{1 - |\ip{\psi}{\phi}|^2}, \ee
and we use the notation $\psi = \proj{\psi}$. We say that an algorithm which achieves these bounds is a quantum $\eps$-tester for $\mathcal{P}$. Some interesting properties of quantum states (such as productness and permutation-invariance) can be tested in this framework~\cite{montanaro13c}.

Wang~\cite{wang11} has given a tester which tests membership of a state $\ket{\psi}$ in an {\em arbitrary} finite subset $\mathcal{P}$ of quantum states using $O(\log |\mathcal{P}|)$ copies of $\ket{\psi}$, assuming that all the states in $\mathcal{P}$ are far apart from each other. Formally, Wang's result is as follows:

\begin{thm}[Wang~\cite{wang11}]
\label{thm:wang}
Let $\mathcal{P}$ be a finite subset of the unit sphere in $\C^d$ such that $\min_{\ket{\phi},\ket{\phi'} \in \mathcal{P}} \| \phi - \phi'  \|_{\operatorname{tr}} = \zeta$. Then, for any $\eps > 0$, there is a test which accepts every state in $\mathcal{P}$ with certainty, rejects every state $\ket{\psi}$ such that $\min_{\ket{\phi} \in \mathcal{P}} \| \psi - \phi  \|_{\operatorname{tr}}\ge \epsilon$ with probability at least $2/3$, and uses $O((\log |\mathcal{P}|) \max\{ \epsilon^{-2}, \zeta^{-2} \})$ copies of the input state.
\end{thm}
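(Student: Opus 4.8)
The plan is to amplify the pairwise separation of the states in $\mathcal{P}$ by passing to tensor powers, and then to test membership with a single projective measurement onto the span of these tensor powers. Write $\mathcal{P} = \{\ket{\phi_1},\dots,\ket{\phi_N}\}$ with $N = |\mathcal{P}|$, fix an integer $t = \Theta\big((\log N)\max\{\epsilon^{-2},\zeta^{-2}\}\big)$, and set $\ket{\Phi_i} = \ket{\phi_i}^{\otimes t}$ and $V = \operatorname{span}\{\ket{\Phi_1},\dots,\ket{\Phi_N}\}$. The test consumes $t$ copies of $\ket{\psi}$, forms $\ket{\psi}^{\otimes t}$, performs the two-outcome projective measurement $\{\Pi_V, I - \Pi_V\}$ where $\Pi_V$ is the orthogonal projector onto $V$, and accepts on the outcome $\Pi_V$. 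The copy count is exactly $t$, which already matches the claimed bound; the content is to show that this $t$ makes both cases behave correctly.

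Soundness reduces to estimating the acceptance probability $\|\Pi_V \ket{\psi}^{\otimes t}\|^2$ in the two cases. If $\ket{\psi} = \ket{\phi_i} \in \mathcal{P}$ then $\ket{\psi}^{\otimes t} = \ket{\Phi_i} \in V$, so $\Pi_V\ket{\psi}^{\otimes t} = \ket{\psi}^{\otimes t}$ and the test accepts with certainty, as required, with no error incurred. In the far case $\min_i \|\psi - \phi_i\|_{\operatorname{tr}} \ge \epsilon$, equation \eqref{eq:td} gives $|\ip{\psi}{\phi_i}|^2 \le 1 - \epsilon^2$ for every $i$, hence $|\ip{\Phi_i}{\psi^{\otimes t}}|^2 = |\ip{\psi}{\phi_i}|^{2t} \le (1-\epsilon^2)^t \le e^{-\epsilon^2 t}$. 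The crux is therefore to pass from $\|\Pi_V\ket{\psi}^{\otimes t}\|^2$, which involves the projector onto the whole span, to these individual overlaps, which we control directly.

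The step I expect to be the main obstacle is exactly this passage: $\Pi_V$ is \emph{not} the sum $\sum_i \proj{\Phi_i}$ of rank-one projectors, since the $\ket{\Phi_i}$ are not orthogonal and a priori the overlaps could reinforce. This is where the separation parameter $\zeta$ — and hence the factor $\zeta^{-2}$ — enters. By \eqref{eq:td} the pairwise overlaps obey $|\ip{\phi_i}{\phi_j}| \le \sqrt{1-\zeta^2}$, so after tensoring $|\ip{\Phi_i}{\Phi_j}| \le (1-\zeta^2)^{t/2} \le e^{-\zeta^2 t/2}$. Thus the Gram matrix $G$ with entries $G_{ij} = \ip{\Phi_i}{\Phi_j}$ is a perturbation $G = I + E$ of the identity whose off-diagonal mass per row is at most $N e^{-\zeta^2 t/2}$; choosing $t \ge (2/\zeta^2)\log(2N)$ forces $\|E\| \le 1/2$ by Gershgorin's theorem, so $\lambda_{\min}(G) \ge 1/2$. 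Since $\sum_i \proj{\Phi_i}$ has the same nonzero spectrum as $G$ and the same range $V$, this yields the operator inequality $\Pi_V \preceq 2\sum_i \proj{\Phi_i}$. Combining the pieces, in the far case
\[ \|\Pi_V\ket{\psi}^{\otimes t}\|^2 = \bracket{\psi^{\otimes t}}{\Pi_V}{\psi^{\otimes t}} \le 2\sum_{i=1}^N |\ip{\Phi_i}{\psi^{\otimes t}}|^2 \le 2N e^{-\epsilon^2 t}, \]
which is at most $1/3$ once $t \ge \epsilon^{-2}\log(6N)$. Taking $t$ to be the maximum of the two requirements gives $t = O\big((\log N)\max\{\epsilon^{-2},\zeta^{-2}\}\big)$ and completes the argument; the only remaining work is bookkeeping the constants in the two exponential tail bounds and the Gershgorin estimate, all of which is routine.
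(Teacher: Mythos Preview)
The paper does not prove this theorem; it is stated as a cited result of Wang~\cite{wang11} and is included only to contrast with the paper's own improvement (Theorem~\ref{thm:teststates}), which removes the $\zeta$-dependence. So there is no ``paper's own proof'' to compare against.

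That said, your argument is correct and self-contained. The key technical point --- controlling the projector $\Pi_V$ onto the span of the non-orthogonal tensor powers via the Gram matrix --- is handled cleanly: since $A := \sum_i \proj{\Phi_i}$ and $G$ share the same nonzero spectrum and $A$ is supported on $V$, the bound $\lambda_{\min}(G)\ge 1/2$ indeed gives $\Pi_V \preceq 2A$, and the Gershgorin estimate on $G - I$ is exactly what forces the $\zeta^{-2}$ term in $t$. The two thresholds $t \ge (2/\zeta^2)\log(2N)$ and $t \ge \epsilon^{-2}\log(6N)$ combine to the stated $O((\log|\mathcal P|)\max\{\epsilon^{-2},\zeta^{-2}\})$, completeness is exact because $\ket{\phi_i}^{\otimes t}\in V$, and soundness follows from the displayed chain of inequalities. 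One minor remark: your Gershgorin bound actually gives $(N-1)e^{-\zeta^2 t/2}$ rather than $Ne^{-\zeta^2 t/2}$, but this only helps.
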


Wang applied this result to testing finite subsets of the unitary group, and in particular to testing permutations~\cite{wang11}.  Here we can improve Theorem \ref{thm:wang} by removing the dependence on the minimum distance $\zeta$, at the expense of introducing two-sided error:

\begin{thm}
\label{thm:teststates}
Let $\mathcal{P}$ be a finite subset of the unit sphere in $\C^d$. Then, for any $\epsilon > 0$, there is a quantum $\eps$-tester for $\mathcal{P}$ which uses $O((\log |\mathcal{P}|) /\epsilon^2)$ copies of the input state.
\end{thm}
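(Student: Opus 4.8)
The plan is to reduce Theorem~\ref{thm:teststates} to the sequential-measurement primitive of \thmref{informal} (Corollary~\ref{cor:seqmeasproptest}). The key observation is that membership in $\mathcal{P} = \{\ket{\phi_1},\dots,\ket{\phi_m}\}$ is naturally an OR of $m$ two-outcome tests: for each $\ket{\phi_i}$ define the projective measurement $M_i = \{\Lambda_i, I - \Lambda_i\}$ with $\Lambda_i = \proj{\phi_i}$, acting on a single copy of the input $\ket{\psi}$. Then $\tr(\Lambda_i \proj{\psi}) = |\ip{\phi_i}{\psi}|^2$. If $\ket{\psi} \in \mathcal{P}$, say $\ket{\psi} = \ket{\phi_i}$, then $\tr(\Lambda_i \proj{\psi}) = 1 = \Omega(1)$, so we are in the first case of \thmref{informal}. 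If $\ket{\psi}$ is $\epsilon$-far from every element of $\mathcal{P}$, then by the trace-distance formula~\eq{td} we have $|\ip{\phi_i}{\psi}|^2 = 1 - \|\psi - \phi_i\|_{\operatorname{tr}}^2 \le 1 - \epsilon^2$ for every $i$; unfortunately this only shows each individual acceptance probability is bounded away from $1$, not that it is $o(1/m)$, so the hypotheses of \thmref{informal} are \emph{not} directly met.

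The gap to close is exactly this: \thmref{informal} needs the ``far'' case to have \emph{average} acceptance probability $o(1/n)$, whereas here each $|\ip{\phi_i}{\psi}|^2$ could be as large as $1-\epsilon^2$, a constant. I would repair this by amplification, i.e.\ by replacing each single-copy test $\Lambda_i$ with a test on $t = O((\log m)/\epsilon^2)$ copies of $\ket{\psi}$. The natural amplified measurement is the rank-one projector $\Lambda_i^{(t)} = \proj{\phi_i}^{\otimes t}$ acting on $\ket{\psi}^{\otimes t}$, whose acceptance probability is $\big(|\ip{\phi_i}{\psi}|^2\big)^t$. In the yes-case this is still $1$; in the far-case it is at most $(1-\epsilon^2)^t \le e^{-\epsilon^2 t}$, and choosing $t = \Theta((\log m)/\epsilon^2)$ drives this below $1/m^2$, so that the \emph{average} over $i$ is $o(1/m)$ as required. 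Crucially, each $\Lambda_i^{(t)}$ is still a single projective two-outcome measurement, now on the larger Hilbert space $(\C^d)^{\otimes t}$, so the sequence $\Lambda_1^{(t)},\dots,\Lambda_m^{(t)}$ meets the hypotheses of \thmref{informal} verbatim with $n = m = |\mathcal{P}|$ and with $\rho = \proj{\psi}^{\otimes t}$ treated as a single copy of the amplified input.

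I would then invoke Corollary~\ref{cor:seqmeasproptest} on this amplified sequence: in the yes-case it accepts with probability $\Omega(1)$, in the far-case with probability $o(1)$, and the bounded acceptance probability can be boosted to the required $2/3$ versus $1/3$ by the standard majority-vote repetition already noted in the property-testing discussion. The total copy cost is the number of copies consumed by one run of Corollary~\ref{cor:seqmeasproptest} on the amplified tests, times $t$; since \thmref{informal} uses one copy of $\rho$, one run consumes $t = O((\log m)/\epsilon^2)$ copies of $\ket{\psi}$, giving the claimed $O((\log|\mathcal{P}|)/\epsilon^2)$ bound.

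The main obstacle I anticipate is \textbf{not} the counting but the quantitative alignment between the two regimes in \thmref{informal}: I must verify that the constant $\Omega(1)$ lower bound in the yes-case and the $o(1/n)$ average upper bound in the far-case, as actually quantified in Corollary~\ref{cor:seqmeasproptest}, are simultaneously achievable by a single choice of the amplification parameter $t$. In particular I need the $o(1/n)$ condition to translate into a concrete requirement like ``average acceptance $\le c/m$'' for an explicit constant $c$ fixed by the corollary, and then pick $t$ large enough to beat it; checking that the logarithmic dependence $t = O((\log m)/\epsilon^2)$ genuinely suffices (rather than, say, needing an extra $\log$ factor from the precise form of the OR bound) is the step that warrants care. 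A secondary subtlety is confirming that nothing in \thmref{informal} requires the measurements to act on the \emph{same} single physical copy rather than on $t$ copies bundled together --- but since the bundle $\ket{\psi}^{\otimes t}$ is itself just a pure state on a larger space, and the $\Lambda_i^{(t)}$ are genuine projectors on that space, this reduction is clean and the obstacle is confined to the parameter bookkeeping above.
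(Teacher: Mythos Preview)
Your proposal is correct and matches the paper's own proof essentially line for line: apply Corollary~\ref{cor:seqmeasproptest} to the amplified projectors $\proj{\phi_i}^{\otimes k}$ on $\ket{\psi}^{\otimes k}$ with $k = O((\log|\mathcal{P}|)/\epsilon^2)$, so that the yes-case has acceptance probability $1$ and the far-case has per-test acceptance at most $(1-\epsilon^2)^k$, small enough that $4|\mathcal{P}|(1-\epsilon^2)^k \le 1/8$; then repeat $O(1)$ times. The obstacles you flagged are non-issues exactly as you suspected.
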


This resolves Question 6 in~\cite{montanaro13c}. We remark that a similar bound to Wang's (albeit with two-sided error) can be obtained simply by reducing the problem of testing membership of $\ket{\psi}$ in $\mathcal{P}$ to {\em identifying} $\ket{\psi}$, given the promise that it is contained in $\mathcal{P}$~\cite{harrow12a}. Theorem \ref{thm:teststates} goes beyond this idea, and efficiently tests membership in sets of states $\mathcal{P}$ for which identifying a member of $\mathcal{P}$ is more challenging, because there exist states in $\mathcal{P}$ which are close to each other.

The ability to test properties of states naturally extends to testing properties of operations, i.e.\ unitary operators or quantum circuits~\cite{wang11,montanaro13c}. Indeed, using a standard correspondence between states and operations known as the Choi-Jamio\l kowski isomorphism, any tester for a property of quantum states gives a tester for a corresponding property of unitaries. Now the relevant distance measure can be defined in terms of the Hilbert-Schmidt inner product $\langle A, B \rangle = \tr A^\dag B / d$ for $d$-dimensional operators $A$, $B$, as $D(A,B) = \sqrt{1-|\hsip{A}{B}|^2}$~\cite{montanaro13c} (compare (\ref{eq:td})). This is an average-case measure of distance, as for unitary operators $U$, $V$,
\[ D(U,V)^2 \approx \int \|U\proj{\psi}U^\dag - V\proj{\psi}V^\dag\|^2_{\operatorname{tr}}\,d\psi, \]
where the integral is taken over the Haar (uniform) measure on the set of pure quantum states $\ket{\psi}$~\cite{montanaro13c}; it is also closely related to the distance between $A$ and $B$ in the Schatten 2-norm~\cite{montanaro13c}. Let $\mathcal{P}$ be a subset of $U(d)$, the set of $d$-dimensional unitary operators. We are given access to an initially unknown operator $U \in U(d)$, and asked to distinguish between the following two cases with success probability $2/3$: either $U \in \mathcal{P}$, or $\inf_{V \in \mathcal{P}} D(U,V) \ge \epsilon$. As in the case of properties of quantum states, we say that an algorithm which achieves these bounds is a quantum $\eps$-tester for $\mathcal{P}$. 

Using Theorem \ref{thm:teststates}, we obtain the following corollary:

\begin{cor}
\label{cor:testunitaries}
Let $\mathcal{P}$ be a finite subset of $U(d)$. Then, for any $\epsilon > 0$, there is a quantum $\eps$-tester for $\mathcal{P}$ which makes $O((\log |\mathcal{P}|) /\epsilon^2)$ uses of the input unitary operator.
\end{cor}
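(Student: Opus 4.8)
The plan is to reduce testing membership in $\mathcal{P} \subseteq U(d)$ to testing membership in a corresponding finite set of pure states, and then invoke Theorem \ref{thm:teststates} directly. The bridge is the Choi-Jamio\l kowski isomorphism, which the excerpt has already flagged as the natural tool: to each unitary $V \in U(d)$ I associate its Choi state $\ket{v_V} = (V \ot I)\ket{\Phi} \in \C^d \ot \C^d$, where $\ket{\Phi} = \frac{1}{\sqrt{d}}\sum_{i=1}^d \ket{i}\ket{i}$ is the maximally entangled state. The key observation is that a single copy of $\ket{v_V}$ can be produced using exactly one application of $V$: prepare $\ket{\Phi}$ and apply $V$ to the first register. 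A direct computation gives $\ip{v_U}{v_V} = \frac{1}{d}\tr U^\dag V = \hsip{U}{V}$, so that the Hilbert-Schmidt inner product of operators coincides with the ordinary inner product of their Choi states.

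Next I set $\mathcal{P}' = \{\ket{v_V} : V \in \mathcal{P}\}$, a finite subset of the unit sphere in $\C^{d^2}$ with $|\mathcal{P}'| = |\mathcal{P}|$ (the Choi map is injective on unitaries). Combining the inner-product identity above with \eqref{eq:td} yields
\[ \| v_U - v_V \|_{\operatorname{tr}} = \sqrt{1 - |\ip{v_U}{v_V}|^2} = \sqrt{1 - |\hsip{U}{V}|^2} = D(U,V). \]
Consequently $U \in \mathcal{P}$ if and only if $\ket{v_U} \in \mathcal{P}'$, while $\inf_{V \in \mathcal{P}} D(U,V) \ge \epsilon$ holds if and only if $\min_{\ket{\phi} \in \mathcal{P}'} \| v_U - \phi \|_{\operatorname{tr}} \ge \epsilon$. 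Thus the promise for the unitary property $\mathcal{P}$ maps exactly onto the promise for the state property $\mathcal{P}'$, and any $\eps$-tester for $\mathcal{P}'$ is automatically an $\eps$-tester for $\mathcal{P}$.

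Finally I apply Theorem \ref{thm:teststates} to $\mathcal{P}' \subseteq \C^{d^2}$. This produces a tester using $O((\log |\mathcal{P}'|)/\epsilon^2) = O((\log |\mathcal{P}|)/\epsilon^2)$ copies of $\ket{v_U}$, and since each such copy costs one use of the input unitary $U$, the total query count is $O((\log |\mathcal{P}|)/\epsilon^2)$, as claimed. Note that the dimension of the ambient Hilbert space increases from $d$ to $d^2$, but Theorem \ref{thm:teststates} has no dependence on the dimension, so this causes no loss.

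Since this is a corollary of an already-proven theorem, there is no substantive technical obstacle; the only points requiring care are the two standard verifications that make the reduction tight — that the Choi-Jamio\l kowski map preserves inner products (up to the $1/d$ normalisation built into $\hsip{\cdot}{\cdot}$), and that one copy of the Choi state costs exactly one query to $U$. Both are routine, and once they are in place the distance-measure identity for $D$ and the query accounting follow immediately.
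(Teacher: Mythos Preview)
Your proposal is correct and follows exactly the same route as the paper: apply the Choi--Jamio\l kowski map to reduce $\mathcal{P}\subseteq U(d)$ to the state property $\mathcal{P}'=\{\ket{V}:V\in\mathcal{P}\}$, use the identity $D(U,V)=\|v_U-v_V\|_{\operatorname{tr}}$ to match the promises, and invoke Theorem~\ref{thm:teststates}. The paper's proof is essentially a two-line version of what you wrote; your additional remarks on injectivity and dimension-independence are fine but not needed.
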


Another consequence of these ideas is an efficient algorithm for a generalisation of the notion of $G$-isomorphism of functions to isomorphism of unitary operators\footnote{We would like to thank Noah Linden for suggesting this application.}. Imagine we have a known set $S = U_1,\dots,U_n$ of unitary operators, and two unknown unitary operators $V$ and $W$. We would like to determine whether there exists $U \in S$ such that $U V U^\dag = W$. More precisely, our task is to distinguish between two cases: either there exists $U \in S$ such that $U V U^\dag = W$, or for all $U \in S$, $D(U V U^\dag, W) \ge \epsilon$. We call this property unitary $S$-isomorphism. For example, $V$ and $W$ could be quantum circuits on $n$ qubits, $S$ could be the set of all permutations of $n$ qubits, and we might like to determine whether there exists $\sigma \in S$ such that $\sigma V \sigma^{-1} = W$. This is one natural quantum generalisation of the property of isomorphism of boolean functions.

\begin{thm}
\label{thm:uiso}
For any $S$, there is a quantum $\epsilon$-tester for unitary $S$-isomorphism which makes $O((\log |S|) / \epsilon^2)$ uses of the input unitaries $V$ and $W$.
\end{thm}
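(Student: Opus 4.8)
The plan is to reduce the problem to testing membership of a quantum state in a finite set, and then to invoke the sequential-measurement test of Corollary~\ref{cor:seqmeasproptest}. As in Corollary~\ref{cor:testunitaries}, I would pass through the Choi--Jamio\l kowski isomorphism: writing $\ket{\Phi} = \frac{1}{\sqrt{d}}\sum_{i} \ket{i}\ket{i}$ for the maximally entangled state and $\ket{\Phi_M} = (M \otimes I)\ket{\Phi}$ for the (normalised) Choi state of an operator $M$, one has $\ip{\Phi_A}{\Phi_B} = \hsip{A}{B}$, so that the distance $D$ on unitaries equals the trace distance on Choi states. Thus the yes-instance ``$\exists\, U_i \in S$ with $U_i V U_i^\dag = W$'' becomes ``$\ket{\Phi_W}$ lies in $\mathcal{P} := \{ \ket{\Phi_{U_i V U_i^\dag}} : U_i \in S \}$'', and the no-instance ``$D(U_i V U_i^\dag, W) \ge \epsilon$ for all $i$'' becomes ``$\ket{\Phi_W}$ is $\epsilon$-far in trace distance from every element of $\mathcal{P}$''. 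A single use of $W$ prepares one copy of $\ket{\Phi_W}$.

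The obstacle that prevents us from simply quoting Theorem~\ref{thm:teststates} or Corollary~\ref{cor:testunitaries} is that the set $\mathcal{P}$ is not known in advance: it is defined through the black box $V$. The key observation that removes this obstacle is the identity $\ket{\Phi_{U_i V U_i^\dag}} = (U_i \otimes \overline{U_i})\ket{\Phi_V}$, which follows from $(M \otimes I)\ket{\Phi} = (I \otimes M^T)\ket{\Phi}$. In words, every state in $\mathcal{P}$ is obtained from the \emph{single} template state $\ket{\Phi_V}$ by applying a \emph{known} local unitary $U_i \otimes \overline{U_i}$. Consequently, a measurement that compares $\ket{\Phi_W}$ against $\ket{\Phi_{U_i V U_i^\dag}}$ can be realised using only stored copies of $\ket{\Phi_V}$ and $\ket{\Phi_W}$ together with the known gates $U_i$, $\overline{U_i}$ and swap tests; no further calls to $V$ or $W$ are needed once the copies have been prepared.

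Concretely, I would set $k = O((\log|S|)/\epsilon^2)$, prepare the fixed state $\rho = (\proj{\Phi_V})^{\otimes k} \otimes (\proj{\Phi_W})^{\otimes k}$ using $k$ uses of each of $V$ and $W$, and for each $i$ define the two-outcome measurement $\Lambda_i$ that applies $U_i \otimes \overline{U_i}$ to each of the $k$ template registers and then performs a swap test between the $j$-th rotated template register and the $j$-th copy of $\ket{\Phi_W}$, accepting iff all $k$ swap tests accept (the rotations are undone afterwards so that $\rho$ is left available for the next measurement). Since the swap test on $\ket{\Phi_W} \otimes \ket{\Phi_{U_i V U_i^\dag}}$ accepts with probability $\tfrac{1}{2}(1 + |\hsip{U_i V U_i^\dag}{W}|^2)$, we get $\tr \Lambda_i \rho = 1$ in the yes instance (for the $i$ with $U_i V U_i^\dag = W$) and $\tr \Lambda_i \rho \le (1 - \epsilon^2/2)^k \le e^{-k\epsilon^2/2} = o(1/|S|)$ for every $i$ in the no instance. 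These are exactly the two cases of Corollary~\ref{cor:seqmeasproptest}, which then yields a test that uses a \emph{single} copy of $\rho$ and accepts with probability $\Omega(1)$ in the first case and $o(1)$ in the second; a constant number of repetitions boosts this to the required $2/3$.

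The step I expect to be the main obstacle is controlling the number of uses of $V$ (and $W$). The sequential-measurement procedure of Corollary~\ref{cor:seqmeasproptest} applies a whole sequence of the $\Lambda_i$ to its one copy of $\rho$, so a naive implementation that re-creates each $\ket{\Phi_{U_i V U_i^\dag}}$ on demand would spend a fresh call to $V$ per measurement and blow the budget. The identity of the second paragraph is precisely what averts this: the dependence on the unknown $V$ is confined to the one-time preparation of the $k$ template registers $\ket{\Phi_V}$, after which every $\Lambda_i$ is built from known operations acting on the fixed state $\rho$. Hence the total cost is $k$ uses of $V$ and $k$ uses of $W$, i.e.\ $O((\log|S|)/\epsilon^2)$ of each, as claimed. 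The remaining details---checking that each $\Lambda_i$ is a genuine projective two-outcome measurement on $\rho$ and that the swap-test probabilities are as stated---are routine.
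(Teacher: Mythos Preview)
Your proposal is correct and close in spirit to the paper's argument, but the mechanics differ. The paper does not use swap tests or Corollary~\ref{cor:seqmeasproptest} directly; instead it reduces to Lemma~\ref{lem:eigentest} (the eigenvector test). Concretely, the paper takes $\ket{\psi}=\ket{V}\ket{W}$ (one Choi register for each of $V$ and $W$) and, for each $U\in S$, defines a \emph{unitary} $U'$ that applies $U\otimes U^{*}$ to the first register, $U^{\dag}\otimes U^{T}$ to the second, and then swaps the two registers. This sends $\ket{\psi}$ to $\ket{U^{\dag}WU}\ket{UVU^{\dag}}$, so $U'\ket{\psi}=\ket{\psi}$ exactly when $UVU^{\dag}=W$, and $|\bracket{\psi}{U'}{\psi}|=|\hsip{UVU^{\dag}}{W}|^{2}=1-\epsilon^{2}$ in the far case; Lemma~\ref{lem:eigentest} then finishes in $O((\log|S|)/\epsilon^{2})$ copies of $\ket{\psi}$.

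Both arguments hinge on the same key point you correctly isolated: the Choi state of $U_iVU_i^{\dag}$ is a \emph{known} local rotation of $\ket{\Phi_V}$, so all black-box uses of $V$ and $W$ are spent once, up front, on preparing copies of the Choi states, and the per-$U_i$ tests use only known gates. Your route packages this via swap tests and a direct appeal to Corollary~\ref{cor:seqmeasproptest}; the paper packages it via a single unitary $U'$ and the Hadamard-test machinery inside Lemma~\ref{lem:eigentest}. The paper's version is a touch more economical (one Choi pair per copy rather than a template/target pair, and it reuses Lemma~\ref{lem:eigentest} verbatim), while yours is perhaps more explicit about why Theorem~\ref{thm:teststates} cannot be quoted off the shelf and how to repair that. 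Either way the bound is the same.
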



\subsection{Testing genuine multipartite entanglement}
One important property of quantum states is that of being entangled.   
A pure state $\ket{\psi}$ on $n$ qudits is said to be product (resp.~entangled) across the cut $S:S^c$ (for $S$ a nontrivial subset of $[n]$) if $\ket{\psi} = \ket{\alpha}_S \ot \ket{\beta}_{S^c}$ for some states $\ket{\alpha},\ket{\beta}$ (resp.~$\ket{\psi}$ cannot be written in this way).   If such an $S$ exists then we say that $\ket\psi$ is product across some cut.   If no such $S$ exists then we say that $\ket{\psi}$ has ``genuine multipartite entanglement'' or ``genuine $n$-partite entanglement''; this term reflects the fact that an $n$-partite state might be describable entirely in terms of entangled states on smaller subsets of systems.

In \cite{harrow10,harrow13} two of us analyzed a property tester for a related question, of whether $\ket{\psi}$ is equal to an $n$-partite product state $\ket{\alpha_1}\otimes \cdots \otimes \ket{\alpha_n}$ or far from any such state.  This property tester -- known as the ``product test'' -- can also be used to test whether $\ket{\psi}$ is entangled across any fixed cut $S:S^c$.  With \thmref{informal} we can extend this to test whether entanglement exists across all cuts simultaneously.

\begin{thm}\label{thm:genuine}
There is a quantum $\eps$-tester for the property of an $n$-partite state being product across some cut.  The tester uses $O(n/\eps^2)$ copies of the state.
\end{thm}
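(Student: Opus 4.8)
The plan is to reduce the problem to the quantum OR bound (Corollary~\ref{cor:seqmeasproptest}) by associating one two-outcome measurement to each of the $N = 2^{n-1}-1$ nontrivial cuts $S:S^c$. The state fed into the OR bound is $\rho = \proj{\psi}^{\ot m}$ on $m = O(n/\eps^2)$ copies of the unknown state, and the measurement $\Lambda_S$ attached to cut $S$ is an amplified version of the product (SWAP) test across that cut. The conceptual point is that the OR bound lets us ``try'' all $N$ cut-tests on a single block of copies, so that the disturbance effects of the introduction do not arise.

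Concretely, for a fixed cut the product test of~\cite{harrow10,harrow13} is a SWAP test on the $S$-subsystems of two copies of $\ket{\psi}$, which accepts with probability $(1+\tr\rho_S^2)/2$, where $\rho_S = \tr_{S^c}\proj{\psi}$. First I would record the two facts I need. If $\ket{\psi}$ is product across $S$ then $\rho_S$ is pure, $\tr\rho_S^2 = 1$, and the test accepts with certainty. If instead $\ket{\psi}$ is $\eps$-far from every state product across $S$, then the largest eigenvalue of $\rho_S$ satisfies $\lambda_1 \le 1-\eps^2$ -- the closest product state across the cut is built from the top Schmidt vector and has overlap $\lambda_1$ and trace distance $\sqrt{1-\lambda_1}$ -- so $\tr\rho_S^2 = \sum_i \lambda_i^2 \le \lambda_1 \le 1-\eps^2$ and the SWAP test accepts with probability at most $1-\eps^2/2$. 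I then define $\Lambda_S$ to be the projector onto the event that $k$ independent SWAP tests across $S$, run on $k$ disjoint pairs among the $m=2k$ copies, all accept; this is a genuine projector, so the $M_S = \{\Lambda_S, I-\Lambda_S\}$ are legitimate two-outcome projective measurements, all acting on the same $\rho$.

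With this setup the two cases of the OR bound are immediate. In the ``product across some cut'' case there is a cut $S^\ast$ with $\tr\Lambda_{S^\ast}\rho = 1 = \Omega(1)$. In the ``$\eps$-far'' case, being far from the union property forces $\ket{\psi}$ to be $\eps$-far from product across \emph{every} cut, so $\tr\Lambda_S\rho \le (1-\eps^2/2)^k \le e^{-k\eps^2/2}$ for all $S$. Plugging these bounds into Corollary~\ref{cor:seqmeasproptest} gives a test on $m = 2k = O(n/\eps^2)$ copies that accepts with probability $\Omega(1)$ in the first case and $o(1)$ in the second; standard repetition then boosts this to the $2/3$ versus $1/3$ gap required of an $\eps$-tester.

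The main obstacle -- and the one quantitative point to get right -- is the amplification count, which is driven by the exponential number of cuts. Because there are $N \approx 2^{n-1}$ measurements, the OR bound demands $\E_S[\tr\Lambda_S\rho] = o(1/N) = o(2^{-n})$ in the far case, so $k$ must be chosen large enough that $e^{-k\eps^2/2} = o(2^{-n})$. Taking $k = \Theta(n/\eps^2)$ with a constant exceeding $2\ln 2$ suffices, and this is exactly where the factor of $n$ in the final copy count originates: one cannot afford a classical union bound over the exponentially many cuts, but the per-cut acceptance need only beat $2^{-n}$ rather than a constant, and this is absorbed by amplifying each SWAP test $O(n/\eps^2)$-fold.
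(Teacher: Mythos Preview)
Your proposal is correct and follows essentially the same approach as the paper: amplify the two-copy SWAP/product test across each cut to drive the per-cut acceptance probability below $2^{-\Omega(n)}$, then invoke Corollary~\ref{cor:seqmeasproptest} over the $2^{n-1}-1$ cuts. The paper simply cites Lemma~20 of~\cite{harrow13} for the $1-\Theta(D(S)^2)$ acceptance probability of the two-copy test, whereas you spell out the Schmidt-coefficient bound $\tr\rho_S^2 \le \lambda_1 \le 1-\eps^2$ explicitly; the resulting parameters and copy count are the same.
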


We note that this result does not follow from \thmref{teststates} because the set of product states is infinite, and even defining an $\eps$-net over the set of product states would cause the sample complexity to depend on the dimensions of the subsystems.


\subsection{De-Merlinizing quantum protocols}

Our final application, following work of Aaronson~\cite{aaronson06}, is to a problem in quantum communication complexity~\cite{buhrman10}.  Let $f:\{0,1\}^n \times \{0,1\}^m \rightarrow \{0,1\}$ be a (possibly partial) boolean function. We imagine that the input to $f$ is divided into two parts, the first of which is given to Alice, the second to Bob. Their goal is to output $f(x,y)$ for given $x$ and $y$ with the minimum amount of communication. Let $Q^1(f)$ denote the bounded-error one-way quantum communication complexity of $f$: the minimal number of qubits Alice needs to send to Bob in order to achieve worst-case success probability of $2/3$.

A more general model~\cite{aaronson06} is to allow Alice and Bob to be assisted by Merlin, who provides a quantum witness $\ket{\phi}$ to Bob, and aims to convince him that $f(x,y) = 1$. Let $\ket{\psi_x}$ be the state which Alice sends to Bob on input $x$. We demand that:
\begin{itemize}
\item If $f(x,y)=1$, then there exists $\ket{\phi}$ such that Bob accepts the triple $(y, \ket{\psi_x}, \ket{\phi})$ with probability at least $2/3$;
\item If $f(x,y)=0$, then for all $\ket{\phi}$, Bob accepts $(y, \ket{\psi_x}, \ket{\phi})$ with probability at most $1/3$.
\end{itemize}
Let QMA$^1_w(f)$ denote the minimal $a$ such that there exists a protocol of this form where Alice sends $a$ qubits to Bob, and Merlin sends $w$ qubits to Bob. Then it was claimed in~\cite{aaronson06} that:

\begin{thm}
\label{thm:qma1}
For all partial or total boolean functions $f$, and all $w \ge 2$,
\[ Q^1(f) = O(\operatorname{QMA}^1_w(f) \cdot w \log^2 w). \]
\end{thm}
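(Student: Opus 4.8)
The plan is to recast de-Merlinization as an instance of the quantum OR bound (\thmref{informal}, in the precise form of Corollary~\ref{cor:seqmeasproptest}), letting Bob play the role of Merlin by \emph{guessing} witnesses and using the OR bound to combine these guesses without incurring the disturbance that invalidated the argument in~\cite{aaronson06}. Fix Alice's input $x$, so that her message is a fixed but (to Bob) unknown state $\ket{\psi_x}$ on $a = \operatorname{QMA}^1_w(f)$ qubits, and fix Bob's input $y$. Bob's verification is a two-outcome measurement on $\ket{\psi_x}$ together with a $w$-qubit witness register; plugging a candidate witness $\ket{\phi}$ into that register turns the verification into a two-outcome measurement $\{\Lambda_\phi, I-\Lambda_\phi\}$ on $\ket{\psi_x}$ alone. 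The promise on $f$ becomes: if $f(x,y)=1$ there is a $\ket{\phi^\star}$ with $\tr \Lambda_{\phi^\star}\proj{\psi_x} \ge 2/3$, while if $f(x,y)=0$ then $\tr \Lambda_\phi \proj{\psi_x} \le 1/3$ for \emph{every} $\ket{\phi}$. The naive idea — have Bob sweep through candidate witnesses, measuring $\Lambda_{\phi_1},\Lambda_{\phi_2},\dots$ in turn on one copy of $\ket{\psi_x}$ and accepting if any accepts — is exactly the approach of~\cite{aaronson06}, and is exactly what fails: in the $f(x,y)=1$ case many witnesses may accept with intermediate probability, each disturbing $\ket{\psi_x}$, so the good witness $\ket{\phi^\star}$ may only be reached after $\ket{\psi_x}$ has been corrupted. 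This is precisely the anti-Zeno phenomenon of the introduction.

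To repair this I would discretize the witnesses by a finite $\epsilon$-net $\{\ket{\phi_1},\dots,\ket{\phi_n}\}$ and take $\Lambda_i := \Lambda_{\phi_i}$, so that the $\{\Lambda_i\}$ are exactly the operators to which Corollary~\ref{cor:seqmeasproptest} applies. Continuity of $\tr\Lambda_\phi\proj{\psi_x}$ in $\ket{\phi}$ ensures a good witness survives discretization, so when $f(x,y)=1$ some $\Lambda_i$ still accepts with probability $\Omega(1)$, matching the first case of \thmref{informal}. The difficulty is the second case: \thmref{informal} requires the \emph{average} acceptance $\E_j[\tr\Lambda_j\proj{\psi_x}]$ to be $o(1/n)$, whereas soundness only guarantees $\le 1/3$ per witness. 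I would therefore first amplify the verification so that in the $f(x,y)=0$ case every witness is accepted with probability $o(1/n)$. The key structural point is that it is now Bob, not an untrusted Merlin, who supplies the witness register, so Bob may feed the amplified verifier as many copies of a candidate witness as he wishes and may reuse a single net witness across the rounds of a Marriott--Watrous-style~\cite{marriott05} in-place amplification, driving the soundness down while keeping the witness register of size $O(w)$.

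With amplification in place I would apply Corollary~\ref{cor:seqmeasproptest} to the single state $\rho = \ket{\psi_x}^{\otimes T}$ (the $\Lambda_i$ now acting on all $T$ copies) and output its accept/reject decision. This accepts with probability $\Omega(1)$ when $f(x,y)=1$ and $o(1)$ when $f(x,y)=0$; a constant number of independent repetitions then yields the required $2/3$ success probability for a one-way protocol in which Alice simply sends the $T$ copies of $\ket{\psi_x}$, giving communication $O(a\,T)$. The remaining task is to show $T = O(w\log^2 w)$.

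The main obstacle, and the entire content of the $w\log^2 w$ factor, is controlling $T$. A constant-precision net of the $2^w$-dimensional witness space has $\log n = 2^{\Theta(w)}$, so the requirement $\E_j[\tr\Lambda_j\proj{\psi_x}] = o(1/n)$ appears at first to force a soundness, and hence a number of amplification rounds and copies, that is \emph{exponential} in $w$. The delicate point is to show that this blow-up is only apparent, by exploiting that each amplification round consumes copies of the Bob-generated witness rather than fresh copies of the unknown $\ket{\psi_x}$, and that the OR bound's own gap amplification composes with the soundness amplification so that the number $T$ of copies of Alice's message actually entering $\rho$ grows only polynomially. Verifying this interplay — that neither amplification step demands exponentially many copies of $\ket{\psi_x}$, that completeness and the two-sided error stay bounded throughout, and that the witness register never grows beyond $O(w)$ — is where I expect the real work to lie. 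The conceptual step of routing the disturbance problem through \thmref{informal} is what fixes the gap in~\cite{aaronson06}; turning that into the stated polynomial overhead is the genuinely technical part.
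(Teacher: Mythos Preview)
Your proposal has a genuine gap that you yourself flag but do not resolve: discretizing the witness space by an $\epsilon$-net produces $n$ with $\log n = 2^{\Theta(w)}$, so driving the soundness below $1/n$ requires $2^{\Theta(w)}$ rounds of amplification, each consuming a fresh copy of Alice's message. No composition of Marriott--Watrous with the OR bound's internal amplification avoids this: the number of iterations $N$ in Algorithm~\ref{alg:mw} must be $\Omega(n)$ to detect an eigenvalue of $\Lambda = \frac{1}{n}\sum_j\Lambda_j$ of order $1/n$, and the resulting upper bound $2N\tr\Lambda\rho$ then forces $\tr\Lambda\rho = o(1/n)$. Your hope that ``the blow-up is only apparent'' is not borne out; with a net you genuinely get $T = 2^{\Theta(w)}$.

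The paper sidesteps the net entirely. After amplifying Bob's verifier via Lemma~\ref{lem:aaramp} (already proved in~\cite{aaronson06}, and the source of the $w\log^2 w$ factor) so that the witness register has $W = O(w\log w)$ qubits and soundness $5^{-W}$, it takes the $\Lambda_j$ to be the measurements induced by the $d = 2^W$ \emph{computational basis states} $\ket{j}$ of the witness register. The key identity is $\E_j[\tr\Lambda_j\phi] = \tr\Gamma(\phi\otimes I/d)$: averaging over the basis is the same as feeding the maximally mixed witness, so in the yes case $\E_j[\tr\Lambda_j\phi] \ge \tr\Gamma(\phi\otimes\sigma)/d \ge (\eta - \|\phi-\psi_x\|_{\operatorname{tr}})/d$ for the optimal witness $\sigma$. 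That is enough to show $\psi_x$ has $\Omega(\eta^2)$ overlap with the eigenspace of $\Lambda$ above $1/(2N)$ for $N = \lceil d/\eta\rceil$, and \thmref{mw} then gives acceptance probability $\ge \eta^2/7$ (Corollary~\ref{cor:demerlin}). In the no case, soundness $5^{-W}$ against \emph{every} witness yields $\tr\Lambda\psi_x \le 5^{-W}$, so acceptance is at most $4\cdot 5^{-W}\cdot 2^W = o(1)$. Thus only $2^W$ measurements are needed, not $2^{2^{\Theta(w)}}$, and the amplification is already paid for by Lemma~\ref{lem:aaramp}.

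Note also that Corollary~\ref{cor:seqmeasproptest}, which you invoke, requires some single $\Lambda_i$ to accept with probability $\Omega(1)$; with basis states as witnesses no individual $\ket{j}$ need be good (the optimal witness may be a uniform superposition). Corollary~\ref{cor:demerlin} is a separate consequence of \thmref{mw}, replacing the ``one good measurement'' hypothesis by the maximally-mixed-witness lower bound above.
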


Several applications of this result were also given in~\cite{aaronson06}, to random access coding, one-way communication and complexity theory. The basic idea behind the proof of Theorem \ref{thm:qma1} is that first Alice and Bob's protocol can be amplified such that the soundness error (the probability of incorrectly accepting) can be made exponentially small in $w$, while keeping the completeness error (the probability of incorrectly rejecting) at most $1/3$. Then Bob can replace his use of the witness $\ket{\phi}$ from Merlin with simply looping over all possible computational basis states of Merlin's witness register and trying each such state as a witness in turn. In a ``yes'' case ($f(x,y)=1$), at least one of these will have a sufficiently high probability of acceptance that this can be distinguished from a ``no'' case. This procedure requires $O(2^w)$ measurements to be applied to $\ket{\psi_x}$ (actually slightly more because of the amplification step).

Unfortunately, the proof of a key lemma required for Theorem \ref{thm:qma1} in~\cite{aaronson06} (Lemma 14, the ``Quantum OR bound'') does not appear to be correct. One step of the proof claims the following: given a sequence of $t$ 2-outcome measurements performed on an initial state $\rho$ to leave a final state $\rho_t$, if $\| \rho_t - \rho\|_{\operatorname{tr}} > \sqrt{\alpha}$ for some $\alpha$, then the probability that at least one measurement yields outcome 1 is at least $\alpha$. However, this is false, as shown by the ``quantum anti-Zeno'' sequence of measurements discussed in the Introduction.  

Here we can use our testing procedures to deal with this issue and give a corrected proof of Theorem \ref{thm:qma1}.  

The corrected Quantum OR bound from~\cite{aaronson06} could also be useful for other applications in quantum information. For example, it can be used to give a proof that the security of certain proposed quantum money schemes~\cite{bennett83} must rest on computational, rather than information-theoretic, assumptions~\cite{aaronson16} (Scott Aaronson, personal communication).

{\bf Remark:} 
It is only the proof in \cite{aaronson06} that is incorrect, but the protocol there may well work.  This is because the measurements are performed in a random order and we do not know of a set of measurements which has the anti-Zeno property for most choices of ordering.

\subsection{Disturbance and $G$-isomorphism}

To provide some intuition for our results, we now outline how applying sequential measurements to quantum states connects to testing $G$-isomorphism. The $G$-isomorphism tester is based on a simple idea: testing whether $g = f\circ\sigma$ for each permutation $\sigma \in G$. (A drawback of this technique is that the testers produced are not time-efficient, but just query-efficient.) We construct $k$ copies of a state $\ket{\psi}$ corresponding to evaluating $f$ and $g$ on every possible input in superposition, for some $k$ to be determined. Given one copy of $\ket{\psi}$, for any $\sigma$ we can distinguish between the cases that $g = f\circ\sigma$, and $g$ is far from $f\circ\sigma$, with success probability lower bounded by a constant. Given $k$ copies of $\ket{\psi}$, we can distinguish between these two cases with exponentially small probability of failure in $k$. We would like to reuse the state $\ket{\psi}^{\otimes k}$ for each test, to avoid making any further queries.

In the ``far from isomorphic'' case, as the probability that the measurement incorrectly says ``isomorphic'' is exponentially small in $k$, $\ket{\psi}^{\otimes k}$ will only be disturbed by an exponentially small amount~\cite{winter99,ogawa02,aaronson06}. However, in the ``isomorphic'' case, it could be the case that, as well as having $g = f\circ\sigma$ for some $\sigma \in G$, we have $g \approx f \circ \tau$ for some $\tau \neq \sigma$. If the test for $\tau$ has a fairly large probability of success, and yet still outputs ``not isomorphic'', the resulting state may be substantially disturbed, implying that the future test for isomorphism under $\sigma$ may incorrectly reject. One way to address this problem is to introduce an additional test for disturbance of the state $\ket{\psi}^{\otimes k}$ (see Appendix \ref{app:dist}): if the state is substantially disturbed at any point during the algorithm, we know that the answer should be ``isomorphic''. Alternatively, we could repeatedly perform a random test and attempt to reset the system to its initial state after each measurement, similarly to Marriott-Watrous gap amplification (see Section \ref{sec:mw}). In either case it turns out to be sufficient to take $k = O((\log |G|)/\epsilon)$ to distinguish between the cases that $f$ and $g$ are $G$-isomorphic, or $\epsilon$-far from $G$-isomorphic.

The strategy of testing each $\sigma$ in turn is similar to the technique used by Ettinger, H\o yer and Knill~\cite{ettinger04} to give a polynomial-query quantum algorithm for the nonabelian hidden subgroup problem (HSP) by testing each subgroup in turn. However, the algorithm of~\cite{ettinger04} did not have the issue with disturbing the state that we need to address here. This was a consequence of the hidden subgroup promise in the standard HSP. In the HSP, we are given access to a function $f$ which is promised to be constant on cosets of some subgroup $H$, and distinct on each coset. The second part of this promise implies that, if we test $f$ for being constant on cosets of any subgroup $H'\neq H$, the test is likely to fail, so the state is left almost undisturbed. In the property-testing scenarios we consider here, we do not have this promise.

It is well known that testing graph isomorphism of rigid graphs (without a promise that ``no'' instances are far from isomorphic) reduces to the HSP for the symmetric group. Our results on $G$-isomorphism, however, do not use a reduction to the property-testing variant of the HSP for $G$. The standard reduction would produce a function $F(\sigma) = f\circ\sigma$, where $\sigma \in G$ and $f\circ\sigma$ represents the table of all values $f(\sigma(x))$; so evaluating $F$ for any given $\sigma$ requires $|X|$ queries. In the case of boolean function isomorphism, for example, evaluating $F$ would require $2^n$ queries, and would destroy any exponential speedup.


\subsection{Organisation}

We present two procedures for determining whether one of a sequence of $n$ measurements accepts a state with high probability: one based on Marriott-Watrous gap amplification, and one based on testing disturbance. The procedures both have similar parameters, and either of them would suffice to prove our main results. The procedure based on testing disturbance\footnote{The first version of this paper only included this procedure.} has somewhat worse constants and a less elegant proof of correctness, so we relegate a description of this to Appendix \ref{app:dist} and focus on the modified Marriott-Watrous procedure, which we now describe.


\section{Modified Marriott-Watrous gap amplification}
\label{sec:mw}


Our procedure will be based on a subroutine which, roughly speaking, performs eigenvalue estimation on a POVM element $\Lambda$ applied to one copy of $\rho$. More precisely, we can use this subroutine to determine if $\rho$ has high support on the space of eigenvectors with large eigenvalues.

Assume that we have one copy of some quantum state $\rho$ and a 2-outcome POVM $\{\Lambda, I-\Lambda\}$, where we are given an explicit decomposition $\Lambda \otimes \ket{0}\bra{0}^{\otimes m} = \Delta \Pi \Delta$, where $m$ is some integer, $\Delta = I \otimes \ket{0}\bra{0}^{\otimes m}$, and $\Pi$ is an orthogonal projector. This is the case in most applications: for example if we are given an explicit circuit description of a measurement corresponding to $\Lambda$ that consists of appending $m$ ancilla qubits in the state $\ket{0}$, applying a unitary $U$, and then making a projective measurement $P$, then the success probability on input $\psi$ is
\[ \bra{\psi} \bra{0}^{\otimes m} U^\dagger P U \ket{\psi}\ket{0}^{\otimes m} = \bra{\psi} \bra{0}^{\otimes m} \Delta \Pi \Delta \ket{\psi}\ket{0}^{\otimes m} \]
where we take $\Pi = U^\dagger P U$. In this case the projector $\Delta$ serves to check that the ancilla qubits are initialized to $\ket{0}$ properly; if they are, the success probability should be equal to $\bra{\psi} \Lambda \ket{\psi}$, and therefore $\Lambda \otimes \ket{0}\bra{0}^{\otimes m} = \Delta \Pi \Delta$ as desired.

Alternatively, if the Naimark extension of $\Lambda$ is given then we also have such a decomposition for $\Lambda$ with $m=1$; or if $\Lambda$ is a projector then we can take a trivial decomposition with $m=0$.

One special case that will be important for us is when we have a sequence of projectors $\Lambda_i$, where $\Lambda_i$ corresponds to the two-outcome measurement $M_i = \{ \Lambda_i, I - \Lambda_i \}$, and we would like to implement the POVM element $\Lambda := \frac{1}{n} \sum_{j=1}^n \Lambda_j$. Define the projector $\Pi = \sum_{i=0}^{n-1} \Lambda_{i+1} \otimes (Q \proj{i} Q^{-1})$, where $Q$ is the quantum Fourier transform on $\Z_n$, and also define $\Delta = I \otimes \proj{0}$. Given quantum circuits which implement each measurement $M_i$, it is easy to write down a circuit that implements the measurement $\{\Pi,I-\Pi\}$. Then
\[ \Delta \Pi \Delta = \left( \frac{1}{n} \sum_{j=1}^n \Lambda_j \right) \otimes \proj{0} = \Lambda \otimes \proj{0}, \]
so this gives us an implementation of $\Lambda$ as desired.

Our procedure is described as Algorithm \ref{alg:mw} below. The algorithm is based on the Marriott-Watrous procedure for in-place amplification for QMA \cite{marriott05}, but we give a simplified procedure that is similar to the ``OR-type repetition procedure'' in \cite{fefferman16}, but with different analysis.

\boxalgm{alg:mw}{Modified Marriott-Watrous gap-amplification procedure}{
\begin{enumerate}
\item Create the state $\rho \otimes \proj{0}^{\otimes m}$. 
\item Repeat $N$ times or until the algorithm accepts:
\begin{enumerate}
\item Perform the projective measurement $\{\Pi,I-\Pi\}$. If the first result is returned, accept.
\item Perform the projective measurement $\{\Delta,I-\Delta\}$. If the second result is returned, accept.
\end{enumerate}
\item Reject.
\end{enumerate}
}

\begin{thm}
\label{thm:mw}
Let $p_{\operatorname{acc}}(N)$ be the acceptance probability of Algorithm \ref{alg:mw} when applied to the measurement operator $\Lambda$ and state $\rho$, and write $P_{\ge \delta}$ for the projector onto $\operatorname{span} \{ \ket{\phi}: \Lambda \ket{\phi} = \lambda \ket{\phi}, \lambda \ge \delta\}$. Then
\[ (1-e^{-1}) \tr P_{\ge \frac{1}{2N}} \rho\;\le\;p_{\operatorname{acc}}(N)\;\le\;2N \tr \Lambda \rho. \]
\end{thm}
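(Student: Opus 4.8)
The plan is to reduce the many-round dynamics of Algorithm \ref{alg:mw} to a collection of independent two-dimensional systems using Jordan's lemma, and then sum an explicitly solvable per-block acceptance probability. Since both $\Pi$ and $\Delta$ are orthogonal projectors, Jordan's lemma decomposes the Hilbert space into a direct sum of subspaces of dimension one and two that are simultaneously invariant under $\Pi$ and $\Delta$. In a generic two-dimensional block I choose a basis in which $\Delta = \proj{0}$ and $\Pi = \proj{v}$ with $\ket{v} = \cos\theta\ket{0} + \sin\theta\ket{1}$; the degenerate (one-dimensional) blocks correspond to $\theta \in \{0,\pi/2\}$ and are handled as limiting cases. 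Crucially, because $\Delta\Pi\Delta = \Lambda \otimes \proj{0}^{\otimes m}$ and the initial state $\rho\otimes\proj{0}^{\otimes m}$ lies entirely in the range of $\Delta$, the angle of each block satisfies $\cos^2\theta = \lambda$, where $\lambda$ is the corresponding eigenvalue of $\Lambda$; moreover the initial state's weight on a block equals $\tr(P_\lambda \rho)$, the weight of $\rho$ on the $\lambda$-eigenspace of $\Lambda$, with $P_\lambda$ the corresponding spectral projector.

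Next I would track one round of Step 2 inside a single block, starting from $\ket{0}$. Step (a) accepts with probability $\bra{0}\Pi\ket{0} = \cos^2\theta = \lambda$; conditioned on rejection, the post-measurement state is the normalized vector $(I-\Pi)\ket{0}$, namely $\sin\theta\ket{0} - \cos\theta\ket{1}$. Step (b) then accepts (returns $I-\Delta$) with probability $\cos^2\theta = \lambda$, and conditioned on its rejection the state collapses back to $\ket{0}$. This return-to-start is the key structural observation: the process is a renewal, so each of the $N$ rounds fails to accept with probability exactly $(1-\lambda)^2$, independently, and the per-block acceptance probability after $N$ rounds is exactly $1 - (1-\lambda)^{2N}$. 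Summing over blocks against the spectral weights gives the exact formula $p_{\operatorname{acc}}(N) = \sum_\lambda \tr(P_\lambda \rho)\,\bigl(1 - (1-\lambda)^{2N}\bigr)$.

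Both inequalities then follow from elementary bounds on $1 - (1-\lambda)^{2N}$. For the upper bound I use $1 - (1-\lambda)^{2N} \le 2N\lambda$ for $\lambda \in [0,1]$, so that $p_{\operatorname{acc}}(N) \le 2N \sum_\lambda \lambda\, \tr(P_\lambda\rho) = 2N\tr\Lambda\rho$. For the lower bound I discard all terms with $\lambda < \frac{1}{2N}$ and use that for $\lambda \ge \frac{1}{2N}$ one has $(1-\lambda)^{2N} \le \bigl(1 - \tfrac{1}{2N}\bigr)^{2N} \le e^{-1}$, hence $1 - (1-\lambda)^{2N} \ge 1 - e^{-1}$; this yields $p_{\operatorname{acc}}(N) \ge (1-e^{-1}) \sum_{\lambda \ge 1/(2N)} \tr(P_\lambda\rho) = (1-e^{-1})\tr P_{\ge \frac{1}{2N}}\rho$.

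The main obstacle is the first paragraph: correctly invoking Jordan's lemma and verifying that the degenerate one-dimensional blocks (where $\Pi$ and $\Delta$ commute, giving $\lambda \in \{0,1\}$) are consistent with the formula, together with confirming that the renewal structure — the exact collapse back to $\ket{0}$ after a doubly-rejecting round — holds in every block, so that the survival probability genuinely factorizes as $(1-\lambda)^{2N}$ with no cross-round memory. Once this decoupling is established, the remaining estimates are routine.
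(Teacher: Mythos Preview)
Your proof is correct and follows essentially the same approach as the paper: both derive the exact formula $p_{\operatorname{acc}}(N) = \sum_\lambda \tr(P_\lambda\rho)\bigl(1-(1-\lambda)^{2N}\bigr)$ and then apply the identical elementary bounds $1-(1-\lambda)^{2N}\le 2N\lambda$ and $(1-\lambda)^{2N}\le e^{-1}$ for $\lambda\ge 1/(2N)$. The only difference is that the paper bypasses Jordan's lemma by directly observing that each $\ket{\widetilde\psi_i}:=\ket{\psi_i}\otimes\ket{0}^{\otimes m}$ satisfies $\Delta(I-\Pi)\ket{\widetilde\psi_i}=(1-\lambda_i)\ket{\widetilde\psi_i}$, so the unnormalized reject-path state after $N$ rounds is simply $(\Delta(I-\Pi))^N\ket{\psi}\ket{0}^{\otimes m}$; your per-block renewal argument recovers exactly this but with slightly more scaffolding.
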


\begin{proof}
First assume $\rho$ is pure, $\rho = \proj{\psi}$. Decompose $\ket{\psi}$ into an eigenbasis of $\Lambda$:
\[ \ket{\psi} = \sum_i \alpha_i\ket{\psi_i}, \]
where $\Lambda \ket{\psi_i} = \lambda_i \ket{\psi_i}$ and the $\psi_i$'s are normalized states. Appending the ancilla qubits, we have
\[ \ket{\psi} \otimes \ket{0}^{\otimes m} = \sum_i \alpha_i\ket{\widetilde{\psi}_i} \]
where the states $\ket{\widetilde{\psi}_i}:= \ket{\psi_i} \otimes \ket{0}^{\otimes m}$ are eigenvectors of $\Delta \Pi \Delta$, and moreover $\Delta \ket{\widetilde{\psi}_i} = \ket{\widetilde{\psi}_i}$. Note that
\[ \Delta (I-\Pi) \ket{\widetilde{\psi}_i} = \Delta \ket{\widetilde{\psi}_i} - \Delta \Pi \Delta \ket{\widetilde{\psi}_i} = (1 - \lambda_i) \ket{\widetilde{\psi}_i}. \]
Therefore if Algorithm \ref{alg:mw} does not accept in Step 2, the residual unnormalized state is
\[ (\Delta (I - \Pi))^N \ket{\psi} \otimes \ket{0}^{\otimes m} = \sum_i \alpha_i (1-\lambda_i)^N \ket{\widetilde{\psi}_i} \]
and the probability that Algorithm \ref{alg:mw} accepts (doesn't reach Step 3) is
\[ p_{\operatorname{acc}}(N) = 1 - \sum_i |\alpha_i|^2 (1-\lambda_i)^{2N} = \sum_i |\alpha_i|^2 \left[1- (1-\lambda_i)^{2N}\right]. \]
It is now easy to derive the lower and upper bounds claimed in the theorem. First,
\[ p_{\operatorname{acc}}(N) \ge \sum_{i: \lambda_i \ge 1/(2N)} |\alpha_i|^2 \left[1- (1-\lambda_i)^{2N}\right] > (1-e^{-1}) \tr P_{\ge \frac{1}{2N}} \psi, \]
where we used $(1-a)^{1/a} < e^{-1}$ for any $a > 0$. Second,
\[ p_{\operatorname{acc}}(N) \le \sum_i |\alpha_i|^2 (2N\lambda_i) = 2N \tr \Lambda \psi, \]
where we used $1- (1-\lambda_i)^{2N} \le 2N \lambda_i$. Finally, if $\rho$ is mixed, the claim follows from convexity. This completes the proof.
\end{proof}

We now specialise Theorem \ref{thm:mw} to cases of interest for applications.


\section{Property testing}

The first setting in which we would like to use Theorem \ref{thm:mw} is where we have a sequence of measurements, and would like to determine whether one of them accepts with high probability. We will need the following lemma:

\begin{lem}[Gentle measurement / ``almost as good as new'' lemma~\cite{winter99,ogawa02,aaronson05a,aaronson06}]
\label{lem:gm}
Let $\rho$ be a quantum state and let $0 \le \Lambda \le I$ be a measurement operator. Then
\[ \left\| \rho - \frac{\sqrt{\Lambda} \rho \sqrt{\Lambda}}{\tr \Lambda\rho} \right\|_{\operatorname{tr}} \le \sqrt{\tr (I-\Lambda)\rho}. \]
\end{lem}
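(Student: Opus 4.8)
The plan is to prove the pure-state case by a short direct computation and then lift it to arbitrary mixed $\rho$ by a purification argument combined with contractivity of the trace distance. Writing $\rho' = \sqrt{\Lambda}\rho\sqrt{\Lambda}/\tr\Lambda\rho$ for the normalised post-measurement state, the target inequality is $\|\rho - \rho'\|_{\operatorname{tr}} \le \sqrt{\tr(I-\Lambda)\rho}$.

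First I would assume $\rho = \proj{\psi}$ is pure, so that $\ket{\psi'} = \sqrt{\Lambda}\ket{\psi}/\sqrt{\bra{\psi}\Lambda\ket{\psi}}$ and, by the pure-state formula for trace distance in \eq{td}, it suffices to show $|\ip{\psi}{\psi'}|^2 \ge \bra{\psi}\Lambda\ket{\psi}$; this would give $1 - |\ip{\psi}{\psi'}|^2 \le 1 - \bra{\psi}\Lambda\ket{\psi} = \tr(I-\Lambda)\rho$, as required. Computing $\ip{\psi}{\psi'} = \bra{\psi}\sqrt{\Lambda}\ket{\psi}/\sqrt{\bra{\psi}\Lambda\ket{\psi}}$ reduces this, since both sides are nonnegative, to the scalar inequality $\bra{\psi}\sqrt{\Lambda}\ket{\psi} \ge \bra{\psi}\Lambda\ket{\psi}$. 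This in turn follows from the operator inequality $\sqrt{\Lambda} \ge \Lambda$, which holds because $0 \le \Lambda \le I$ forces every eigenvalue $\lambda$ of $\Lambda$ to satisfy $\sqrt{\lambda} \ge \lambda$. So the pure case is essentially immediate once set up correctly.

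The hard part will be the extension to mixed $\rho$, where a naive convexity argument fails because the normalisation $\tr\Lambda\rho$ depends on the state and does not behave linearly under mixing. My plan is to avoid this by introducing a purification $\ket{\Psi}_{AB}$ of $\rho$, with $A$ the original system and $B$ a reference. Applying the pure-state bound already established to the vector $\ket{\Psi}_{AB}$ and the measurement operator $\Lambda \otimes I$ (acting trivially on $B$), and using $\tr[(\Lambda\otimes I)\proj{\Psi}] = \tr\Lambda\rho$ and $\tr[((I-\Lambda)\otimes I)\proj{\Psi}] = \tr(I-\Lambda)\rho$, yields $\|\proj{\Psi} - \proj{\Psi'}\|_{\operatorname{tr}} \le \sqrt{\tr(I-\Lambda)\rho}$, where $\ket{\Psi'} = (\sqrt{\Lambda}\otimes I)\ket{\Psi}/\sqrt{\tr\Lambda\rho}$.

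To finish I would trace out the reference $B$. The two delicate points to verify are (i) that $\tr_B \proj{\Psi'} = \sqrt{\Lambda}\rho\sqrt{\Lambda}/\tr\Lambda\rho = \rho'$, which holds because $\sqrt{\Lambda}\otimes I$ acts only on $A$, while $\tr_B \proj{\Psi} = \rho$ by definition of a purification; and (ii) that the partial trace, being a CPTP map, cannot increase trace distance. Invoking this contractivity then transports the inequality on $AB$ down to $\|\rho - \rho'\|_{\operatorname{tr}} \le \|\proj{\Psi} - \proj{\Psi'}\|_{\operatorname{tr}} \le \sqrt{\tr(I-\Lambda)\rho}$, completing the proof. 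I expect step (i), the bookkeeping of how $\sqrt{\Lambda}$ transfers through the purification, to be where the only genuine care is needed; everything else is standard.
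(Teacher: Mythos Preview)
Your proof is correct. The pure-state computation and the purification-plus-contractivity lift are both sound, and the point you flag as delicate (that $\tr_B \proj{\Psi'} = \rho'$) is indeed just the observation that $\sqrt{\Lambda}\otimes I$ commutes with the partial trace over $B$.

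The paper takes a somewhat different route: rather than separating into pure and mixed cases, it computes the fidelity $F(\rho,\rho') = \tr\sqrt{\rho^{1/2}\rho'\rho^{1/2}}$ directly for arbitrary mixed $\rho$, simplifying $\tr\sqrt{\sqrt{\rho}\sqrt{\Lambda}\rho\sqrt{\Lambda}\sqrt{\rho}} = \tr\sqrt{\rho}\sqrt{\Lambda}\sqrt{\rho} = \tr\sqrt{\Lambda}\rho$ (since $\sqrt{\rho}\sqrt{\Lambda}\sqrt{\rho}$ is already positive semidefinite), and then applies the Fuchs--van de Graaf inequality $\|\rho-\sigma\|_{\operatorname{tr}} \le \sqrt{1-F(\rho,\sigma)^2}$. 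Both arguments hinge on the same operator inequality $\sqrt{\Lambda} \ge \Lambda$; the paper's version is a single computation covering all $\rho$ at once but requires the closed-form mixed-state fidelity, whereas your version trades that for the more elementary pure-state formula together with purification and monotonicity of trace distance. In effect your argument is an unpacking of the paper's via Uhlmann's theorem: the inner product $\ip{\Psi}{\Psi'}$ you compute is exactly the fidelity $F(\rho,\rho')$ the paper writes down.
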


We state this lemma somewhat differently to some previous works; a proof of this version can be found in~\cite[Lemma 9.4.1]{wilde13}. For completeness, we also provide a concise proof here.

\begin{proof}
Let $F(\rho,\sigma) := \tr \sqrt{\rho^{1/2} \sigma \rho^{1/2}}$ be the fidelity between quantum states $\rho$ and $\sigma$. We have
\[ F\left(\rho, \frac{\sqrt{\Lambda} \rho \sqrt{\Lambda}}{\tr \Lambda\rho} \right) = \frac{\tr \sqrt{\sqrt{\rho} \sqrt{\Lambda} \rho \sqrt{\Lambda}\sqrt{\rho}}}{\sqrt{\tr \Lambda \rho}} = \frac{\tr \sqrt{\rho} \sqrt{\Lambda} \sqrt{\rho}}{\sqrt{\tr \Lambda \rho}} = \frac{\tr \sqrt{\Lambda} \rho}{\sqrt{\tr \Lambda \rho}} \ge  \frac{\tr \Lambda \rho}{\sqrt{\tr \Lambda \rho}} = \sqrt{\tr \Lambda \rho}, \]
where the second equality follows because $\sqrt{\rho}\sqrt{\Lambda}\sqrt{\rho}$ is positive semidefinite, the third is cyclicity of the trace, and the inequality is $\sqrt{\Lambda} \ge \Lambda$ for $0 \le \Lambda \le I$. Using the inequality $\|\rho-\sigma\|_{\operatorname{tr}} \le \sqrt{1 - F(\rho,\sigma)^2}$, we obtain the claimed result.
\end{proof}

\begin{cor}
\label{cor:seqmeasproptest}
Let $\Lambda_1,\dots,\Lambda_n$ be a sequence of projectors, where $\Lambda_i$ corresponds to the two-outcome measurement $M_i = \{ \Lambda_i, I - \Lambda_i \}$, and fix parameters $\epsilon \le 1/2$, $\delta$. Let $\rho$ be a state such that either there exists $i$ such that $\tr \Lambda_i \rho \ge 1-\epsilon$ (case 1), or $\E_j[\tr \Lambda_j \rho] \le \delta$ (case 2). Then there is a test that uses one copy of $\rho$ and: in case 1, accepts with probability at least $(1-\epsilon)^2/7$; in case 2, accepts with probability at most $4 \delta n$.
\end{cor}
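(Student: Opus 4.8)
The plan is to instantiate Theorem~\ref{thm:mw} with the averaged POVM element $\Lambda = \frac{1}{n}\sum_{j=1}^n \Lambda_j$ and an appropriately chosen number of repetitions $N$, then read off the two cases. The key quantitative bridge is the relationship between the two cases of the corollary (stated in terms of individual measurements $\Lambda_i$ and the average $\E_j[\tr\Lambda_j\rho]$) and the spectral quantities $\tr P_{\ge 1/(2N)}\rho$ and $\tr\Lambda\rho$ that appear in Theorem~\ref{thm:mw}. Note that $\E_j[\tr\Lambda_j\rho] = \tr\Lambda\rho$ by definition, so the upper bound side of Theorem~\ref{thm:mw} immediately gives, in case 2, that $p_{\operatorname{acc}}(N) \le 2N\tr\Lambda\rho = 2N\,\E_j[\tr\Lambda_j\rho] \le 2N\delta$.

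For case 1, the crucial step is to show that if some single $\Lambda_i$ satisfies $\tr\Lambda_i\rho \ge 1-\epsilon$, then $\rho$ has substantial support on the large-eigenvalue subspace of the average $\Lambda$, i.e. $\tr P_{\ge 1/(2N)}\rho$ is bounded below by a constant. First I would handle the pure-state case $\rho = \proj{\psi}$. Since $\Lambda_i$ is a projector with $\bra{\psi}\Lambda_i\ket{\psi} \ge 1-\epsilon$, the component of $\ket{\psi}$ in the $+1$-eigenspace of $\Lambda_i$ has squared norm at least $1-\epsilon$; call this component $\ket{\phi}$ (normalized), so $|\ip{\psi}{\phi}|^2 \ge 1-\epsilon$ and $\Lambda_i\ket{\phi} = \ket{\phi}$. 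For this $\ket{\phi}$ we have $\bra{\phi}\Lambda\ket{\phi} \ge \frac{1}{n}\bra{\phi}\Lambda_i\ket{\phi} = \frac{1}{n}$, since all the other $\Lambda_j$ are positive semidefinite. The point is that $\ket{\phi}$ has large Rayleigh quotient $\ge 1/n$ with respect to $\Lambda$, which forces $\ket{\phi}$ to have nonnegligible weight on eigenvectors of $\Lambda$ with eigenvalue $\ge 1/(2N)$, once we choose $N$ comparably to $n$.

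Concretely, I would set $N = n$ and argue as follows. Writing $\ket{\phi} = \sum_k \beta_k\ket{\chi_k}$ in the eigenbasis of $\Lambda$ with eigenvalues $\mu_k$, we have $\sum_k|\beta_k|^2\mu_k = \bra{\phi}\Lambda\ket{\phi} \ge 1/n$. Splitting the sum at the threshold $1/(2N) = 1/(2n)$ and bounding $\mu_k \le 1$ on the large part and $\mu_k < 1/(2n)$ on the small part gives $\tr P_{\ge 1/(2n)}\proj{\phi} = \sum_{k:\mu_k \ge 1/(2n)}|\beta_k|^2 \ge 1/n - 1/(2n) = 1/(2n)$; this constant-order-in-$n$ bound is too weak, so the better route is to observe the projector bound directly: $\bra{\phi}\Lambda\ket{\phi} \le \tr P_{\ge 1/(2n)}\proj{\phi} \cdot 1 + (1 - \tr P_{\ge 1/(2n)}\proj{\phi})\cdot\frac{1}{2n}$, which rearranges to $\tr P_{\ge 1/(2n)}\proj{\phi} \ge (1/n - 1/(2n))/(1 - 1/(2n)) \ge 1/(2n)$. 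The main obstacle is that this only lower-bounds the support of the \emph{purified component} $\ket{\phi}$, not of $\ket{\psi}$ itself; I expect the cleanest fix is to transfer the bound back to $\ket{\psi}$ using $|\ip{\psi}{\phi}|^2 \ge 1-\epsilon$ together with the fact that $P_{\ge 1/(2n)}$ is a projector, via $\tr P_{\ge 1/(2n)}\psi \ge (|\ip{\psi}{\phi}| - \|(I-P_{\ge 1/(2n)})\ket{\phi}\|)^2$ or an analogous triangle-inequality argument. Getting the constants to land at $(1-\epsilon)^2/7$ after combining $\tr P_{\ge 1/(2n)}\psi = \Omega(1-\epsilon)$ with the factor $(1-e^{-1})$ from Theorem~\ref{thm:mw} is the delicate bookkeeping step; I would then extend from pure to mixed $\rho$ by convexity exactly as in the proof of Theorem~\ref{thm:mw}, and finally absorb $2N\delta = 2n\delta$ into the claimed bound $4\delta n$ (the slack presumably covering the mixed-state or threshold details).
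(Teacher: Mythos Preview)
Your overall strategy---instantiate Theorem~\ref{thm:mw} with $\Lambda=\frac{1}{n}\sum_j\Lambda_j$ and read off the two bounds---is exactly what the paper does, and your handling of case~2 is fine. But your case~1 argument has a genuine gap. From $\bra{\phi}\Lambda\ket{\phi}\ge 1/n$ alone you can only conclude $\tr P_{\ge 1/(2n)}\proj{\phi}\ge \Omega(1/n)$, and this is tight: $\ket{\phi}$ could place weight $1/n$ on a unit eigenvector of $\Lambda$ and weight $1-1/n$ on its kernel. Your ``better route'' reproduces the same $1/(2n)$ bound, and no triangle-inequality transfer back to $\ket{\psi}$ can manufacture a constant out of $O(1/n)$. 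Plugging $O(1/n)$ into Theorem~\ref{thm:mw} yields $p_{\operatorname{acc}}=O(1/n)$, not the claimed $(1-\epsilon)^2/7$.

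The missing idea is to exploit the operator inequality $\Lambda_i \le n\Lambda$ in the other direction: any state $\sigma$ supported on the small-eigenvalue subspace $Q^\perp$ satisfies $\tr\Lambda_i\sigma \le n\,\tr\Lambda\sigma < n/(2N)$. The paper takes $N=\lceil n/(1-\epsilon)\rceil$ (not $N=n$), so this gives $\tr\Lambda_i\sigma < (1-\epsilon)/2$ for $\sigma := Q^\perp\rho Q^\perp/\tr Q^\perp\rho$. Since $\tr\Lambda_i\rho\ge 1-\epsilon$, one gets $\|\rho-\sigma\|_{\operatorname{tr}}\ge (1-\epsilon)/2$, and then the gentle measurement lemma (Lemma~\ref{lem:gm}) yields $\tr Q\rho\ge\|\rho-\sigma\|_{\operatorname{tr}}^2\ge(1-\epsilon)^2/4$ directly---no auxiliary $\ket{\phi}$, no transfer step, and the argument works for mixed $\rho$ without a separate convexity reduction. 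The choice $N=\lceil n/(1-\epsilon)\rceil$ is also what accounts for the factor of~$4$ (rather than~$2$) in the case~2 bound, via $2N\le 4n$ when $\epsilon\le 1/2$.
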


\begin{proof}
We apply Algorithm \ref{alg:mw} to $\Lambda = \frac{1}{n} \sum_j \Lambda_j$ and $\rho$, taking $N = \lceil n/(1-\epsilon) \rceil$. We first consider case 1. Let $Q$ denote the projector onto $\operatorname{span} \{ \ket{\phi}: \Lambda \ket{\phi} = \lambda \ket{\phi}, \lambda \ge 1/(2N)\}$, and write $Q^\perp = I - Q$. To apply Theorem \ref{thm:mw}, we need to lower-bound $\tr Q \rho$. By Lemma \ref{lem:gm},
\[  \tr Q \rho \ge \left\| \rho - \frac{ Q^\perp \rho Q^\perp }{\tr  Q^\perp \rho} \right\|_{\operatorname{tr}}^2. \]
Write $\sigma = Q^\perp \rho Q^\perp / \tr Q^\perp \rho$. Then, as we are in case 1, $\tr \Lambda_i \rho \ge 1-\epsilon$ and hence
\[ \frac{1-\epsilon}{n} \le \frac{\tr \Lambda_i \rho}{n} = \frac{\tr \Lambda_i (\rho-\sigma) + \tr \Lambda_i \sigma}{n} \le \frac{\|\rho-\sigma\|_{\operatorname{tr}}}{n} +  \frac{\tr \Lambda_i \sigma}{n} \le \frac{\|\rho-\sigma\|_{\operatorname{tr}}}{n} + \tr \Lambda \sigma. \]
We have $\tr \Lambda \sigma < 1/(2N) \le (1-\epsilon)/(2n)$ because $\sigma$ is only supported on eigenvectors of $\Lambda$ with eigenvalues less than $1/(2N)$. Rearranging, we obtain that
\[ \|\rho -\sigma\|_{\operatorname{tr}} \ge \frac{1-\epsilon}{2}, \]
so $\tr Q\rho \ge (1-\epsilon)^2/4$ and hence Algorithm \ref{alg:mw} accepts with probability at least $(1-e^{-1})(1-\epsilon)^2/4 \ge (1-\epsilon)^2/7$. In case 2, $\tr \Lambda \rho \le \delta$ by assumption. So by Theorem \ref{thm:mw}, Algorithm \ref{alg:mw} accepts with probability at most $2\delta \lceil n/(1-\epsilon) \rceil \le 4 \delta n$, using $\epsilon \le 1/2$.
\end{proof}


\subsection{Application to testing $G$-isomorphism}

We now apply Corollary \ref{cor:seqmeasproptest} to testing $G$-isomorphism. In fact, we reduce this problem to the following more general question: testing whether a state is an eigenvector of some unitary operator picked from a known set of unitary operators.

\begin{lem}
\label{lem:eigentest}
Assume that we have access to a sequence of controlled unitaries $U_1,\dots,U_n$, and their inverses, and the ability to produce copies of a state $\ket{\psi}$. We are promised that either there exists $i$ such that $U_i \ket{\psi} = \ket{\psi}$, or for all $i$, $|\bracket{\psi}{U_i}{\psi}| \le 1-\epsilon$. There is an algorithm which distinguishes between these two cases using $O((\log n) / \epsilon)$ copies of $\ket{\psi}$ and succeeds with probability at least $2/3$.
\end{lem}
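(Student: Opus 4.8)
The plan is to reduce the eigenvector test to the ``OR'' setting of Corollary~\ref{cor:seqmeasproptest}. For each index $i$ I would build a projective test whose acceptance probability is $1$ when $U_i\ket{\psi}=\ket{\psi}$ and bounded away from $1$ when $|\bracket{\psi}{U_i}{\psi}|\le 1-\epsilon$; I would sharpen this gap by running the test in parallel on $k=O((\log n)/\epsilon)$ copies of $\ket{\psi}$, and then hand the resulting sequence of $n$ projective tests, together with the single state $\proj{\psi}^{\ot k}$, to the OR bound. The reason for routing through Corollary~\ref{cor:seqmeasproptest} rather than measuring the tests one after another is exactly the disturbance phenomenon of \thmref{informal}: in the first case there may be many indices $j\neq i$ for which $\bracket{\psi}{U_j}{\psi}$ is close to $1$, and running their tests could disturb $\proj{\psi}^{\ot k}$ enough to spoil the later test for $i$.

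The building block would be the Hadamard test. Given controlled access to $U_i$, I would append an ancilla qubit in $\ket{0}$, apply a Hadamard, apply controlled-$U_i$, apply a second Hadamard, and declare ``accept'' if the ancilla is measured in $\ket{0}$; the acceptance probability on $\ket{\psi}$ is then $(1+\Re\bracket{\psi}{U_i}{\psi})/2$. When $U_i\ket{\psi}=\ket{\psi}$ this equals $1$, and when $|\bracket{\psi}{U_i}{\psi}|\le 1-\epsilon$ it is at most $(1+|\bracket{\psi}{U_i}{\psi}|)/2\le 1-\epsilon/2$, since $\Re z\le|z|$. To amplify the gap I would run this Hadamard test independently on each of $k$ copies of $\ket{\psi}$ (each with its own ancilla) and accept only if all $k$ copies accept. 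Writing $W_i$ for the combined circuit and $\Pi_i=W_i^\dagger\big(\proj{0}^{\ot k}\big)W_i$ for the pullback of the projector onto all ancillas being in $\ket{0}$, this is a genuine projective measurement on the $k$ copies together with their ancillas, and on the input state $\rho:=\proj{\psi}^{\ot k}\ot\proj{0}^{\ot k}$ it accepts with probability $\tr\Pi_i\rho=\big((1+\Re\bracket{\psi}{U_i}{\psi})/2\big)^k$. Hence in the first case some $\Pi_i$ accepts $\rho$ with certainty, while in the second case every $\Pi_j$ accepts with probability at most $(1-\epsilon/2)^k$.

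I would then apply Corollary~\ref{cor:seqmeasproptest} to the projectors $\Pi_1,\dots,\Pi_n$ and the state $\rho$, one copy of which is $k$ copies of $\ket{\psi}$ plus free ancillas. Choosing $k=\lceil(2/\epsilon)\ln(100n)\rceil=O((\log n)/\epsilon)$ makes $(1-\epsilon/2)^k\le 1/(100n)=:\delta$, so the second-case hypothesis $\E_j[\tr\Pi_j\rho]\le\delta$ holds, while the first-case hypothesis holds with the corollary's parameter equal to $0$. The corollary then accepts with probability at least $1/7$ in the first case and at most $4\delta n=1/25$ in the second. These probabilities have a constant gap but both lie below $1/2$, so I would boost to success probability $2/3$ by running the whole procedure a constant number of times on fresh copies, counting acceptances, and thresholding between $1/25$ and $1/7$; a Chernoff bound makes $O(1)$ repetitions suffice. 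Each run consumes $k$ copies of $\ket{\psi}$, for a total of $O((\log n)/\epsilon)$ copies, and controlled access to the $U_i$ and their inverses is used inside the Hadamard circuits and in the re-measurement steps of Algorithm~\ref{alg:mw}.

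The conceptual obstacle, disturbance in the first case, is already dispatched by Corollary~\ref{cor:seqmeasproptest}, so the remaining work is quantitative: driving the far-case acceptance probability below the $O(1/n)$ threshold that the OR bound requires. This is exactly what the parallel ``accept-all'' amplification over $k=O((\log n)/\epsilon)$ copies achieves, and it is the source of the claimed copy complexity. The only mild subtlety I anticipate is that the Hadamard test sees $\Re\bracket{\psi}{U_i}{\psi}$ rather than the modulus $|\bracket{\psi}{U_i}{\psi}|$ from the promise; but since $\Re z\le|z|$ this can only lower the far-case acceptance probability, so no separate treatment of the phase of $\bracket{\psi}{U_i}{\psi}$ is needed.
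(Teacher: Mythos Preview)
Your proposal is correct and essentially identical to the paper's proof: both build a $k$-fold Hadamard test whose acceptance probability is $\big((1+\Re\bracket{\psi}{U_i}{\psi})/2\big)^k$, choose $k=O((\log n)/\epsilon)$ to drive the far-case probability below $c/n$, apply Corollary~\ref{cor:seqmeasproptest}, and finish with a constant number of repetitions. The only cosmetic difference is that the paper packages the $k$ ancilla outcomes into a single output qubit via a multi-controlled gate and then uncomputes, whereas you write the projector directly as $W_i^\dagger(\proj{0}^{\ot k})W_i$; these are the same projective measurement.
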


\begin{proof}
Write $\ket{\phi} := (\frac{1}{\sqrt{2}}(\ket{0}+\ket{1})\ket{\psi})^{\otimes k}\ket{0}$ for some $k$ and, for each $i$, define the two-outcome projective measurement $M_i$ by the following sequence of steps:
\begin{enumerate}
\item Apply controlled-$U_i$ on each of the first $k$ registers (controlled on the first qubit within each).
\item Apply a Hadamard gate to each first qubit in each of the first $k$ registers.
\item Apply a controlled-X gate to the last qubit, controlled on all of these $k$ qubits being in the 0 state.
\item Measure the last qubit. Associate outcome 1 with acceptance, 0 with rejection.
\item Invert steps 1-3.
\end{enumerate}
The state after the second step is $(\frac{1}{2}(\ket{0}(I+U_i)\ket{\psi}+\ket{1}(I - U_i)\ket{\psi}))^{\otimes k}\ket{0}$, so the probability of acceptance is
\[ \left(\frac{\| (I+U_i)\ket{\psi} \|^2}{4} \right)^k = \left(\frac{1}{2} + \frac{1}{2} \text{Re} \bracket{\psi}{U_i}{\psi} \right)^k. \]
If $U_i \ket{\psi} = \ket{\psi}$, the measurement accepts with certainty. If $|\bracket{\psi}{U_i}{\psi}| \le 1-\epsilon$, the measurement accepts with probability at most $(1 - \epsilon / 2)^k \le e^{-\epsilon k / 2}$. It is sufficient to take $k = O((\log n) / \epsilon)$ to obtain an acceptance probability in this case which is at most $c/n$ for an arbitrary constant $c > 0$. We can now apply Corollary \ref{cor:seqmeasproptest}. In the former case, we have that the test accepts with probability at least $1/7$. In the latter case, we can choose $c$ such that, by Corollary \ref{cor:seqmeasproptest}, the test accepts with probability at most $1/8$. A constant number of repetitions suffices to distinguish between these two cases with probability at least $2/3$.
\end{proof}

Lemma \ref{lem:eigentest} can be applied to testing $G$-isomorphism. Recall that in this problem we have a group $G$ acting on a set $X$, and would like to distinguish between these two cases: a) there exists $\sigma \in G$ such that $g(x) = f(\sigma(x))$ for all $x \in X$; b) for all $\sigma \in G$, $|\{x \in X: g(x) \neq f(\sigma(x))\}| \ge \epsilon |X|$. Formally, the $G$-isomorphism problem is actually a special case of a quantum problem discussed below (unitary $S$-isomorphism); however, we state and prove it separately for clarity.

\begin{repthm}{thm:giso}
For any $G$, there is a quantum $\epsilon$-tester for $G$-isomorphism which makes $O((\log |G|) / \epsilon)$ queries to $f$ and $g$.
\end{repthm}

\begin{proof}
Write $d(f,g) := |\{x \in X: f(x) \neq g(x)\}|/|X|$. For any function $f:X \rightarrow Y$, define the corresponding state
\[ \ket{f} := \frac{1}{\sqrt{|X|}} \sum_{x \in X} \ket{x} \ket{f(x)}; \]
then $\ip{f}{g} = 1 - d(f,g)$. Also define the unitary operator $U_\sigma$ by $U_\sigma \ket{x} = \ket{\sigma(x)}$ for $\sigma \in G$. Then $U_\sigma\ket{f} = \ket{f\circ\sigma}$. Consider the state $\ket{\psi} = \frac{1}{\sqrt{2}} ( \ket{0} \ket{f} + \ket{1}\ket{g})$ and the operator $U_\sigma'$ which maps $\ket{\psi}$ to $\frac{1}{\sqrt{2}} ( \ket{0} \ket{g \circ \sigma^{-1}} + \ket{1}\ket{f\circ\sigma})$ for any $f$ and $g$. $U_\sigma'$ can easily be implemented using Pauli-X, controlled-$U_\sigma$ and controlled-$U_\sigma^{-1}$ operations. Then
\[ \bracket{\psi}{U_\sigma'}{\psi} = \ip{f\circ\sigma}{g} = 1-d(f\circ\sigma,g). \]
Applying Lemma \ref{lem:eigentest} to the state $\ket{\psi}$ and the unitary operator $U_\sigma'$, we can distinguish between the case that there exists $\sigma$ such that $g = f\circ\sigma$, and the case that $d(g, f\circ\sigma) \ge \epsilon$ for all $\sigma \in G$, with $O((\log |G|) / \epsilon)$ copies of $\ket{\psi}$. Each copy can be created with one query to $f$ and $g$. This proves Theorem \ref{thm:giso}.
\end{proof}


\subsection{Testing quantum states and operations}

We next apply our results to testing properties of quantum states, and then properties of quantum operations. These are all quite straightforward corollaries of previous results in the paper.

\begin{repthm}{thm:teststates}
Let $\mathcal{P}$ be a finite subset of the unit sphere in $\C^d$. Then, for any $\epsilon > 0$, there is a quantum $\eps$-tester for $\mathcal{P}$ which uses $O((\log |\mathcal{P}|) /\epsilon^2)$ copies of the input state.
\end{repthm}

\begin{proof}
Let $k$ be an integer parameter to be determined. We apply Corollary \ref{cor:seqmeasproptest} to the state $\ket{\psi}^{\otimes k}$ and the measurements $\Pi_{\ket{\phi}} = \proj{\phi}^{\otimes k}$, $\ket{\phi} \in \mathcal{P}$. If $\ket{\psi} \in \mathcal{P}$, there exists $\ket{\phi}$ such that $\bra{\psi}^{\otimes k} \Pi_{\ket{\phi}} \ket{\psi}^{\otimes k} = 1$. If $\min_{\ket{\phi} \in \mathcal{P}} \| \psi - \phi \|_{\operatorname{tr}} \ge \epsilon$, then for all $\ket{\phi} \in \mathcal{P}$,
\[ \bra{\psi}^{\otimes k} \Pi_{\ket{\phi}} \ket{\psi}^{\otimes k} = |\ip{\psi}{\phi}|^{2k} = (1- \| \psi - \phi \|_{\operatorname{tr}} ^2)^k \le (1-\epsilon^2)^k. \]
In the former case, the algorithm of Corollary \ref{cor:seqmeasproptest} accepts with probability at least $1/7$. In the latter case, it accepts with probability at most $4|\mathcal{P}|(1-\epsilon^2)^{k}$, so it is sufficient to take $k = O((\log |\mathcal{P}|) / \epsilon^2)$ to make the acceptance probability at most $1/8$. Taking the median of $O(1)$ runs is enough to distinguish these two cases except with probability at most $1/3$.
\end{proof}

We now turn to testing properties of quantum operations. Lifting our results on testing quantum states to testing unitary operators on $\C^d$ is based on the following connection, known as the Choi-Jamio\l kowski isomorphism. Let $\ket{\Phi} = \frac{1}{\sqrt{d}} \sum_{i=1}^d \ket{i} \ket{i}$, and for any $U \in U(d)$, define
\[ \ket{U} = (U \otimes I) \ket{\Phi} = \frac{1}{\sqrt{d}} \sum_{i,j=1}^d U_{ij} \ket{i} \ket{j}. \]
Then it is easy to see that
\be \label{eq:cjcons} \hsip{U}{V} = \ip{U}{V},\;\;\;\; (A \otimes B) \ket{V} = \ket{AVB^T}. \ee
As a consequence of the first equality, $D(U,V) := \sqrt{1 - |\hsip{U}{V}|^2} = \| \proj{U} - \proj{V} \|_{\operatorname{tr}}$. A copy of $\ket{U}$ can be created with one use of $U$.

\begin{repcor}{cor:testunitaries}
Let $\mathcal{P}$ be a finite subset of $U(d)$. Then, for any $\epsilon > 0$, there is a quantum $\eps$-tester for $\mathcal{P}$ which makes $O((\log |\mathcal{P}|) /\epsilon^2)$ uses of the input unitary operator.
\end{repcor}

\begin{proof}
Apply Theorem \ref{thm:teststates} to test membership of $\ket{U}$ in the set $\mathcal{P}' = \{ \ket{V}: V \in \mathcal{P}\}$. The test uses $O((\log |\mathcal{P}|) /\epsilon^2)$ copies of $\ket{U}$, each of which can be constructed with one use of $U$.
\end{proof}

Next we consider the property of unitary $S$-isomorphism. Recall that in this problem we have a set $S = U_1,\dots,U_n$ of unitary operators, and two unitary operators $V$ and $W$. We would like to distinguish between two cases: either there exists $U \in S$ such that $U V U^\dag = W$, or for all $U \in S$, $D(U V U^\dag, W) \ge \epsilon$.

\begin{repthm}{thm:uiso}
For any $S$ and any $\epsilon > 0$, there is a quantum $\epsilon$-tester for unitary $S$-isomorphism which makes $O((\log |S|) / \epsilon^2)$ uses of the input unitaries $V$ and $W$.
\end{repthm}

\begin{proof}
The argument is similar to the proof of Theorem \ref{thm:giso}. We can produce a copy of the state $\ket{\psi} = \ket{V}\ket{W}$ with a single use of each of $V$ and $W$. Similarly, by (\ref{eq:cjcons}), for any $U$ we can implement an operation $U'$ mapping $\ket{\psi}$ to $\ket{\psi'} = \ket{U^\dag W U}\ket{UVU^\dag}$ by applying $U \otimes U^*$ to the first register, and $U^\dag \otimes U^T$ to the second register; and then swapping the two registers. If $U V U^\dag = W$, then $\ket{\psi'} = \ket{\psi}$. On the other hand, if there exists $U \in S$ such that $D(U V U^\dag,W)=\epsilon$, then
\[ |\bracket{\psi}{U'}{\psi}| =  |\hsip{UVU^\dag}{W}|^2 = 1-\epsilon^2. \]
By Lemma \ref{lem:eigentest}, these two cases can be distinguished with $O((\log |S|) / \epsilon^2)$ uses of $V$ and $W$.
\end{proof}

The apparently worse scaling with $\epsilon$ of this result compared with Theorem \ref{thm:giso} is an artifact of the distance measure used being defined differently.


\subsection{Testing genuinely multipartite entanglement}

Our final quantum tester is for the property of not possessing genuine multipartite entanglement -- i.e.\ a quantum state being product across some partition of the qubits into two parts.

\begin{repthm}{thm:genuine}
There is a quantum $\eps$-tester for the property of an $n$-partite state being product across some cut.  The tester uses $O(n/\eps^2)$ copies of the state.
\end{repthm}

\begin{proof}
Suppose we are given $\ket{\psi}^{\otimes k}$ for some $n$-partite state $\ket\psi$.  For any fixed proper nonempty $S\subset [n]$ let $D(S)$ denote the minimum distance of $\ket\psi$ to a state of the form $\ket{\alpha}_S\ot \ket{\beta}_{S^c}$.  According to Lemma 20 of \cite{harrow13} (see also \cite{wei03}), there is a test using two copies of $\ket\psi$ that will accept with probability $1 - \Theta(D(S)^2)$. In particular if $\ket\psi$ is product across $S:S^c$ then it will accept with certainty, and if it is $\eps$-far from product across $S:S^c$ then it will accept with probability $1-\Omega(\eps^2)$. Taking $k=O(n/\eps^2)$ we can reduce this acceptance probability to $2^{-\Omega(n)}$. Since there are $2^{n-1}-1$ ways to partition $[n]$ into two pieces, our result follows from Corollary \ref{cor:seqmeasproptest}. 
\end{proof}


\section{De-Merlinizing quantum protocols}

We finally apply Algorithm \ref{alg:mw} to prove Theorem \ref{thm:qma1}. The key technical ingredient can be stated as follows. Let $\Gamma$ be a measurement operator corresponding to the 2-outcome POVM measurement $\{ \Gamma, I - \Gamma \}$ on a bipartite Hilbert space $\mathcal{H}_A \otimes \mathcal{H}_B$. Let $\{\ket{1},\dots,\ket{d}\}$ be an orthonormal basis for $\mathcal{H}_B$. For all $j \in \{1,\dots,d\}$, let $\Gamma_j$ be the measurement operator on $\mathcal{H}_A$ corresponding to the 2-outcome POVM $\{\Gamma_j, I - \Gamma_j\}$ induced by applying $\Gamma$ to $\mathcal{H}_A \otimes \ket{j}$.

Fix a pure state $\ket{\psi} \in \mathcal{H}_A$. We would like to distinguish between the following two cases:
\begin{enumerate}
\item There exists $\sigma$ in $\mathcal{H}_B$ such that $\Gamma$ accepts with probability at least $\eta > 0$ when applied to $\psi \otimes \sigma$.
\item For all states $\sigma$ in $\mathcal{H}_B$, $\Gamma$ accepts with probability at most $\zeta$ when applied to $\psi \otimes \sigma$.
\end{enumerate}
To do so, we apply Algorithm \ref{alg:mw} to $\ket{\psi}$ and the operator $\Lambda = \frac{1}{d} \sum_{j=1}^d \Lambda_j$, taking $N = \lceil d / \eta \rceil$.

\begin{cor}
\label{cor:demerlin}
In case 1, Algorithm \ref{alg:mw} accepts with probability at least $\eta^2/7$. In case 2, Algorithm \ref{alg:sequential} accepts with probability at most $2 \zeta \lceil d/\eta \rceil$.
\end{cor}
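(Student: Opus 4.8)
The plan is to run Algorithm~\ref{alg:mw} on the single copy of $\ket{\psi}$ with the POVM element $\Lambda=\frac1d\sum_{j=1}^d\Gamma_j$ and $N=\lceil d/\eta\rceil$, exactly as specified before the statement, and then to read the two claimed bounds off the two inequalities in Theorem~\ref{thm:mw}. Case~2 is handled by the upper bound and case~1 by the lower bound, with essentially all the difficulty concentrated in case~1. I would dispose of case~2 first, as it is immediate: the defining relation $\Gamma_j=(I_A\otimes\bra{j})\Gamma(I_A\otimes\ket{j})$ gives $\tr\Gamma_j\psi=\tr[\Gamma(\psi\otimes\proj{j})]$, so averaging over $j$ yields $\tr\Lambda\psi=\tr[\Gamma(\psi\otimes\tfrac{I_B}{d})]$, the acceptance probability against the maximally mixed witness. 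Since $I_B/d$ is a legitimate state, case~2 bounds this quantity by $\zeta$, and the upper bound $p_{\operatorname{acc}}(N)\le 2N\tr\Lambda\psi$ of Theorem~\ref{thm:mw} gives $p_{\operatorname{acc}}(N)\le 2\zeta\lceil d/\eta\rceil$.

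For case~1 I would mirror the structure of the proof of Corollary~\ref{cor:seqmeasproptest}. Let $Q=P_{\ge 1/(2N)}$ and $\sigma'=Q^\perp\psi Q^\perp/\tr Q^\perp\psi$ (if $\tr Q^\perp\psi=0$ then $\tr Q\psi=1$ and the bound is trivial). The gentle-measurement Lemma~\ref{lem:gm}, applied with the projector $Q^\perp$, gives $\tr Q\psi\ge\|\psi-\sigma'\|_{\operatorname{tr}}^2$, so it suffices to prove $\|\psi-\sigma'\|_{\operatorname{tr}}>\eta/2$ and then invoke the lower bound of Theorem~\ref{thm:mw}. Taking the witness $\sigma=\proj{\theta}$ — pure, by convexity of the acceptance probability in $\sigma$ — that achieves $\tr[\Gamma(\psi\otimes\sigma)]\ge\eta$, I would split $\psi=\sigma'+(\psi-\sigma')$ inside this acceptance probability. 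The cross term $\tr[\Gamma((\psi-\sigma')\otimes\sigma)]$ is bounded in absolute value by $\|\psi-\sigma'\|_{\operatorname{tr}}$, using $0\le\Gamma\le I$ together with the fact that $(\psi-\sigma')\otimes\sigma$ is traceless.

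The crux is bounding the surviving term $\tr[\Gamma(\sigma'\otimes\proj{\theta})]=\tr_A[\sigma'\,\Gamma_\theta]$, where $\Gamma_\theta:=(I_A\otimes\bra{\theta})\Gamma(I_A\otimes\ket{\theta})$. The key ingredient I would establish is the operator inequality $\Gamma_\theta\le d\,\Lambda$: for any $\ket{a}\in\mathcal{H}_A$, writing $\ket{\theta}=\sum_j\theta_j\ket{j}$ and $\Gamma=(\sqrt\Gamma)^2$, Cauchy--Schwarz gives $\bra{a}\Gamma_\theta\ket{a}=\big\|\sum_j\theta_j\sqrt\Gamma\ket{a,j}\big\|^2\le\big(\sum_j|\theta_j|^2\big)\sum_j\|\sqrt\Gamma\ket{a,j}\|^2=d\,\bra{a}\Lambda\ket{a}$. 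Since $\sigma'$ is supported on eigenvalues of $\Lambda$ strictly below $1/(2N)\le\eta/(2d)$, this yields $\tr_A[\sigma'\Gamma_\theta]\le d\,\tr[\Lambda\sigma']<d/(2N)\le\eta/2$. Combining the two terms gives $\eta\le\eta/2+\|\psi-\sigma'\|_{\operatorname{tr}}$, hence $\|\psi-\sigma'\|_{\operatorname{tr}}>\eta/2$, so $\tr Q\psi>\eta^2/4$, and the lower bound of Theorem~\ref{thm:mw} delivers $p_{\operatorname{acc}}(N)\ge(1-e^{-1})\eta^2/4\ge\eta^2/7$. I expect the operator inequality $\Gamma_\theta\le d\Lambda$ to be the main obstacle, as it is precisely what converts control over the off-diagonal blocks of $\Gamma$ — which the diagonal average $\Lambda$ cannot see directly — into a statement about the eigenvalue structure of $\Lambda$ that Algorithm~\ref{alg:mw} acts on; the remaining steps are a routine adaptation of Corollary~\ref{cor:seqmeasproptest}.
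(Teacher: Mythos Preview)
Your proof is correct and follows essentially the same route as the paper: both use gentle measurement to reduce to bounding $\|\psi - \sigma'\|_{\operatorname{tr}}$, and then derive this bound from the case-1 hypothesis together with the key inequality $\tr\Gamma(\sigma'\otimes\sigma) \le d\,\tr\Lambda\sigma'$ (equivalently $\Gamma_\theta \le d\Lambda$). The one place you diverge is in proving that inequality---which you flag as the main obstacle---where the paper's argument is actually simpler than your Cauchy--Schwarz: since $\sigma \le I$ and $\sigma',\Gamma \ge 0$, one has $\tr\Gamma(\sigma'\otimes\sigma) \le \tr\Gamma(\sigma'\otimes I) = d\,\tr\Lambda\sigma'$ directly, with no need to reduce to a pure witness first.
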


\begin{proof}
The proof is similar to that of Corollary \ref{cor:seqmeasproptest}. We first observe that, for an arbitrary state $\phi$,
\beas \tr \Lambda \phi &=& \E_j \left[ \tr \Lambda (\phi \otimes \proj{j}) \right] =  \tr \Lambda \left(\phi \otimes \frac{I}{d} \right)  \ge \frac{\tr \Lambda (\phi \otimes \sigma)}{d}\\
&\ge& \frac{\tr \Lambda (\psi \otimes \sigma) -\| \phi - \psi \|_{\operatorname{tr}} }{d} \ge \frac{\eta -\| \phi - \psi \|_{\operatorname{tr}} }{d},
\eeas
implying that for any state $\ket{\phi}$ such that $\tr \Lambda \phi \le \eta / (2d)$, $\|\phi - \psi\|_{\operatorname{tr}} \ge \eta/2$. 
Let $Q$ denote the projector onto $\operatorname{span} \{ \ket{\phi}: \Lambda \ket{\phi} = \lambda \ket{\phi}, \lambda \ge 1/(2N)\}$, set $Q^\perp = I - Q$, and take $\ket{\phi} = Q^\perp \ket{\psi} / \|Q^\perp \ket{\psi}\|$. As $N = \lceil d/\eta \rceil$, $\tr Q^\perp \phi \le \eta/(2d)$. By the same ``gentle measurement'' argument as used in the proof of Corollary \ref{cor:seqmeasproptest}, the fact that $\|\phi - \psi\|_{\operatorname{tr}} \ge \eta/2$ implies that $\tr Q \psi \ge \eta^2/4$. So by Theorem \ref{thm:mw}, Algorithm \ref{alg:mw} accepts with probability at least $(1-e^{-1})\eta^2/4 \ge \eta^2/7$. In case 2, $\tr \Lambda \psi = \tr \Lambda(\psi \otimes I/d) \le \zeta$, so it is immediate from Theorem \ref{thm:mw} that Algorithm \ref{alg:mw} accepts with probability at most $2 \zeta \lceil d / \eta \rceil$.
\end{proof}

It is now straightforward to give a corrected proof of Theorem \ref{thm:qma1} by using Corollary \ref{cor:demerlin} within the framework of Aaronson~\cite{aaronson06}. We will use the following lemma from~\cite{aaronson06}:

\begin{lem}[Aaronson~\cite{aaronson06}]
\label{lem:aaramp}
Suppose Bob receives an $a$-qubit message $\ket{\psi}$ from Alice and a $w$-qubit message $\ket{\phi}$ from Merlin, where $w \ge 2$. Let $A = O(aw\log^2 w)$ and $W = O(w \log w)$. Then by using $A$ qubits from Alice and $W$ qubits from Merlin, Bob can amplify his soundness error to $5^{-W}$ while keeping his completeness error $1/3$.
\end{lem}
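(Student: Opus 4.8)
The plan is to first reduce the lemma to a statement about a single measurement on Merlin's register. Write $\ket{\psi}$ for Alice's honest message and let $\Gamma$ be Bob's original acceptance operator on $\mathcal{H}_A \otimes \mathcal{H}_B$. The honest message induces the two-outcome POVM element $\widetilde{\Gamma} := \bra{\psi}\Gamma\ket{\psi}$ on $\mathcal{H}_B$, and the promise becomes purely spectral: completeness says some witness is accepted with probability $\ge 2/3$, i.e. $\widetilde{\Gamma}$ has an eigenvalue $\ge 2/3$; soundness says every witness is accepted with probability $\le 1/3$, i.e. $\widetilde{\Gamma} \preceq \tfrac{1}{3} I$. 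The task is then to build, using black-box access to the original verifier and to copies of $\ket{\psi}$, a new acceptance operator with the same promise structure but with soundness driven down to $5^{-W}$ while the completeness eigenvalue stays $\ge 2/3$ (so that the completeness error stays below $1/3$).

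It is instructive to see why the naive approach fails, since this dictates the construction. Suppose Merlin sends $t$ copies and Bob runs the basic verification on each (against a fresh copy of $\ket{\psi}$), accepting if at least a fraction $\theta \in (1/3, 2/3)$ pass. For honest product witnesses the $t$ trials are independent with bias $\ge 2/3$, so Chernoff gives good completeness; for an \emph{entangled} Merlin the induced per-copy operators act on distinct registers, hence commute and are each $\preceq \tfrac{1}{3} I$, so diagonalising the commuting family bounds the threshold-acceptance operator by the upper tail of a $\mathrm{Bin}(t,1/3)$ variable, giving soundness $2^{-\Omega(t)}$. The problem is that the witness has grown to size $W = tw$, so this is only $2^{-\Omega(W/w)}$. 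In the de-Merlinizing step (Corollary~\ref{cor:demerlin}) Bob loops over all $2^{W}$ computational-basis witnesses, so the union bound demands soundness $o(2^{-W})$, e.g.\ $5^{-W}$; plain parallel repetition cannot reach this, because each extra witness qubit multiplies the search space by $2$ but improves soundness only by a constant factor in the exponent.

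The fix, and the main obstacle, is to make soundness decay in a parameter \emph{decoupled} from the witness size. I would do this by running the in-place eigenvalue-estimation procedure of Algorithm~\ref{alg:mw} on Merlin's register: keep Merlin's witness fixed and perform $R$ rounds of alternating projective measurements, each round implementing one application of the verifier. By the analysis behind Theorem~\ref{thm:mw}, an eigenvalue $\ge 2/3$ of $\widetilde{\Gamma}$ yields acceptance probability $\ge 2/3$, while $\widetilde{\Gamma} \preceq \tfrac{1}{3} I$ forces acceptance probability $2^{-\Omega(R)}$; since the soundness now decays in the round count $R$ rather than in $W$, taking $R = \Theta(W)$ gives soundness $\le 5^{-W}$ with a witness of size only $O(w\log w)$, which is exactly what the union bound needs. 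The delicate point — the genuine difference from ordinary QMA amplification — is that these guarantees hold only when Alice's register is the true $\ket{\psi}$, and unlike a QMA witness $\ket{\psi}$ cannot be carried in place through the rounds: the estimation procedure disturbs it, and Bob cannot re-prepare the unknown $\ket{\psi}$ by a projection the way he resets his own ancillas to $\ket{0}$. The resolution is to supply a fresh copy of $\ket{\psi}$ for each round, so that every round's induced measurement on Merlin's register is genuinely $\widetilde{\Gamma}$; this is the origin of $A = O(a w \log^2 w)$ (one $a$-qubit copy per round, times the round count and the logarithmic overhead incurred in first cleaning the basic gap up to a constant). The step I expect to require the most care is verifying that replacing the in-place reuse of Alice's register by a fresh copy each round does not corrupt the Marriott--Watrous eigenvalue bookkeeping on the Merlin register, i.e.\ that the alternation of ``run the verifier against a fresh $\ket{\psi}$'' and ``project Merlin's register back'' still realises the alternating-projector structure underlying Theorem~\ref{thm:mw}, so that the $2^{-\Omega(R)}$ soundness and $\ge 2/3$ completeness bounds survive.
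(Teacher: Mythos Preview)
The paper does not prove Lemma~\ref{lem:aaramp}; it is quoted from Aaronson~\cite{aaronson06} and used as a black box in the proof of Theorem~\ref{thm:qma1}. So there is no proof in this paper to compare against, only Aaronson's original argument.

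Your overall plan---witness-preserving amplification \`a la Marriott--Watrous on Merlin's register, with Alice supplying fresh copies of $\ket{\psi}$ for each round---is indeed the mechanism behind Aaronson's lemma, and you correctly isolate the delicate point (Alice's register cannot be reset by any projection Bob can implement). However, there is a concrete error in your appeal to Algorithm~\ref{alg:mw} and Theorem~\ref{thm:mw}. That algorithm is \emph{not} eigenvalue estimation: it is the simplified ``OR-type'' variant that accepts as soon as either the $\Pi$- or the $\Delta$-measurement deviates, and by the proof of Theorem~\ref{thm:mw} its acceptance probability on an eigenvector of eigenvalue $\lambda$ is exactly $1-(1-\lambda)^{2N}$. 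With $\lambda=1/3$ this tends to $1$ as $N$ grows, not to $2^{-\Omega(N)}$, so your soundness claim ``$\widetilde{\Gamma}\preceq\tfrac{1}{3} I$ forces acceptance probability $2^{-\Omega(R)}$'' is false for this procedure. What you actually need is the full Marriott--Watrous protocol of~\cite{marriott05}, which records the entire sequence of outcomes and thresholds on the fraction of flips to separate $\lambda\ge 2/3$ from $\lambda\le 1/3$; only that version yields the two-sided exponential gap you assert. The preliminary $O(\log w)$-fold repetition you allude to (``cleaning the basic gap'') is likewise essential in Aaronson's construction and should be made explicit, as it is what produces the $\log^2 w$ factor in $A$.
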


\begin{repthm}{thm:qma1}
For all partial or total boolean functions $f$, and all $w \ge 2$,
\[ Q^1(f) = O(\operatorname{QMA}^1_w(f) \cdot w \log^2 w). \]
\end{repthm}

\begin{proof}
Let $\Gamma$ be the measurement corresponding to Bob's amplified algorithm from Lemma \ref{lem:aaramp} and let $\ket{\psi_x}$ be the state of Alice's register. We know that, if $f(x,y) = 1$, there exists a state $\ket{\phi}$ of the witness register such that $\Gamma$ accepts $\psi_x \otimes \phi$ with probability at least $2/3$. On the other hand, if $f(x,y)=0$, then for all witness states $\ket{\phi}$, $\Lambda$ accepts $\psi_x \otimes \phi$ with probability at most $5^{-W}$. Inserting these parameters within Corollary \ref{cor:demerlin} and using $d=2^W$, we find that in the former case Algorithm \ref{alg:mw} accepts with probability at least $4/63$, and in the latter case accepts with probability at most $4 \cdot 5^{-W} \cdot 2^W = o(1)$. The two cases can therefore be distinguished with $O(1)$ repetitions.
\end{proof}

\subsection*{Acknowledgements}

We would like to thank Noah Linden for suggesting the application to testing whether one unitary operator is a permutation of another, Mark Wilde for pointing out references \cite{gao15,sen12,wilde13a} and Scott Aaronson for helpful comments on a previous version.
AM was supported by EPSRC Early Career Fellowship EP/L021005/1.
AWH was funded by NSF grants CCF-1629809 and CCF-1452616. CYL is supported by the Department of Defense.


\appendix

\section{An alternative protocol via testing disturbance}
\label{app:dist}

In this appendix we describe a alternative approach towards determining whether one of a sequence of $n$ measurements accepts an input state, based around testing disturbance of the input state. We will need the following result regarding sequences of measurements:

\begin{lem}[Improved quantum union bound~\cite{gao15}]
\label{lem:quantumunion}
Let $\rho$ be a mixed state, and let $M_1,\dots,M_T$ be a sequence of 2-outcome projective measurements. Suppose each $M_t$ yields outcome 1 with probability at most $\epsilon$ when applied to $\rho$. Then if we apply $M_1,\dots,M_T$ in sequence to $\rho$, the probability that at least one measurement yields outcome 1 is at most $4T\epsilon$.
\end{lem}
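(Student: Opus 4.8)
The plan is to reduce the claim to a single telescoping identity and then to control how much the post-measurement state drifts away from $\rho$. Write each measurement as $M_t = \{P_t, I - P_t\}$, where $P_t$ is the projector associated with outcome $1$, and set $Q_t = I - P_t$. Let $\rho_0 = \rho$ and, for the branch in which every measurement so far has returned outcome $0$, let $\rho_t = Q_t \rho_{t-1} Q_t$ be the (subnormalised) state after $t$ steps. The event ``at least one measurement yields outcome $1$'' is the disjoint union over $t$ of the events ``outcome $0$ on steps $1,\dots,t-1$ and outcome $1$ on step $t$'', so its probability is exactly $\sum_{t=1}^T \tr(P_t \rho_{t-1})$. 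This holds verbatim for mixed $\rho$, since each term is linear in $\rho$, so there is no need to pass to pure states. The goal thus becomes to show $\sum_{t=1}^T \tr(P_t \rho_{t-1}) \le 4T\epsilon$, given only that $\tr(P_t \rho) \le \epsilon$ for every $t$.

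For each $t$ I would compare $\rho_{t-1}$ with $\rho$, since the hypothesis controls $\tr(P_t\rho)$ but not $\tr(P_t\rho_{t-1})$. Writing $\rho_{t-1} = \rho - (\rho - \rho_{t-1})$ and using that $P_t$ is a contraction, one gets $\tr(P_t\rho_{t-1}) \le \tr(P_t\rho) + \| \rho - \rho_{t-1}\|_{\operatorname{tr}}$, and the accumulated disturbance $\|\rho - \rho_{t-1}\|_{\operatorname{tr}}$ is in turn built up from the individual ``kicks'' caused by discarding the outcome-$1$ component at each earlier step. The crucial quantitative feature is that the squared sizes of these kicks sum to the total failure probability itself: $\sum_{s} (\text{disturbance at step } s) \approx \sum_s \tr(P_s\rho_{s-1}) = F$, where $F$ is the very quantity we are bounding. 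Feeding this back in and applying Cauchy--Schwarz across the $T$ steps yields a self-referential inequality of the shape $F \le T\epsilon + c\sqrt{T\epsilon \cdot F}$ (up to lower-order terms), which when solved for $F$ gives a bound linear in $\epsilon$, namely $F = O(T\epsilon)$, with the constants tunable to reach $4T\epsilon$.

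I expect the main obstacle to be precisely the step that keeps the dependence linear in $\epsilon$ rather than square-root. The naive estimate --- bounding each disturbance by the triangle inequality and summing --- loses an extra factor and only reproduces the weaker Sen-type union bound of order $\sqrt{T \sum_t \epsilon_t}$; the measurement-induced increments are not mutually orthogonal, so their cross-correlations must be handled with care to avoid spurious factors of $T$. This is exactly the improvement achieved by Gao~\cite{gao15}, so in the write-up I would either invoke Gao's inequality directly as a black box or reprove it by combining the telescoping identity above with a careful second-moment estimate on the increments; the bookkeeping of these cross terms is where the real work lies.
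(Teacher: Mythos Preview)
The paper does not prove this lemma at all: it is stated as a result of Gao~\cite{gao15} and simply cited, with the remark that it quadratically improves earlier bounds. So your fallback option --- invoke Gao's inequality as a black box --- is exactly what the paper does, and nothing more is required here.

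Your sketch of an independent proof is honest about its own limitations, and those limitations are real. The telescoping identity $F=\sum_t \tr(P_t\rho_{t-1})$ and the comparison $\tr(P_t\rho_{t-1})\le \epsilon + \|\rho-\rho_{t-1}\|_{\operatorname{tr}}$ are fine, but when you control $\|\rho-\rho_{t-1}\|_{\operatorname{tr}}$ by the triangle inequality over the individual steps and then apply gentle measurement plus Cauchy--Schwarz, you pick up $F \le T\epsilon + O(T)\sum_s \sqrt{\tr(P_s\rho_{s-1})} \le T\epsilon + O(T^{3/2})\sqrt{F}$, which does not close to $O(T\epsilon)$; it only recovers the weaker $O(\sqrt{T\epsilon})$-type bounds of Sen and Wilde that the paper mentions. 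Your intuition that ``the squared sizes of the kicks sum to $F$'' is right, but that alone is not enough --- the extra factor of $T$ from summing the accumulated disturbances over all $t$ is what kills the argument, and this is precisely the obstacle Gao's proof overcomes by a different (non-triangle-inequality) route. So unless you are prepared to reproduce Gao's argument in full, citing it is both correct and what the paper intends.
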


Lemma \ref{lem:quantumunion}, which is due to Gao~\cite{gao15}, improves previous bounds of a similar nature~\cite{winter99,ogawa02,aaronson06,sen12,wilde13a} up to quadratically.

We assume that we have one copy of some state $\rho$, and have access to quantum circuits which allow us to coherently implement each of a sequence of 2-outcome POVMs specified by projectors $\Lambda_1,\dots,\Lambda_n$, where each $\Lambda_i$ corresponds to the measurement $M_i = \{\Lambda_i,I-\Lambda_i\}$, and the first outcome is associated with acceptance, the second with rejection. We further assume that $\eta$, $\zeta$ are picked such that exactly one of the following two cases holds:
\begin{enumerate}
\item $\rho$ is pure and, for all pure states $\ket{\phi}$ such that $\| \rho - \phi \|_{\operatorname{tr}} \le \eta$, we have $\E_j [ \tr \Lambda_j \phi ] \ge \eta / n$ (``the average probability of acceptance is quite high for all states relatively close to $\rho$'').
\item For all $j$, $\tr \Lambda_j \rho \le \zeta$ (``the probability that any measurement accepts $\rho$ is low'').
\end{enumerate}
Our task is to accept in the first case, and reject in the second. The first case may seem somewhat unintuitive, but we state it in this way so that it encompasses all our applications.

We use Algorithm \ref{alg:sequential} below to complete this task. The intuition behind this algorithm is as follows, in the case that $\rho = \proj{\psi}$ and $\eta = \Theta(1)$. Throughout the algorithm, the state of the system is of the form $\alpha \ket{0}\ket{\psi} + \beta \ket{1}\ket{\widetilde{\psi}}$ for some state $\ket{\widetilde{\psi}}$. Assume we are in case 1 above. If $\ket{\widetilde{\psi}} \approx \ket{\psi}$ and $\beta$ is not too small, the next random choice of measurement will accept with fairly high probability (roughly $\Omega(1/n)$). On the other hand, if $\ket{\widetilde{\psi}}$ is far from $\ket{\psi}$ or $\beta$ is small, the test in step 2a would accept with high probability if it were performed. So the overall probability that the test accepts at this stage is $\Omega(1/n)$ in either case; repeating $O(n)$ times, the overall acceptance probability is $\Omega(1)$. On the other hand, if we are in case 2, we can use Lemma \ref{lem:quantumunion} to infer that after making $O(n)$ measurements the overall acceptance probability is $O(n\zeta)$.

\boxalgm{alg:sequential}{Sequential measurement test}{
\begin{enumerate}
\item Create the state $\proj{+} \otimes \rho$, where $\ket{+} = \frac{1}{\sqrt{2}}( \ket{0} + \ket{1})$.
\item Repeat the following $k := \lceil 5n / \eta+5/\eta^2 \rceil$ times:
\begin{enumerate}
\item With probability $1 / (\eta n + 1)$, perform a Hadamard gate on the first qubit and measure it in the computational basis. If the outcome is 0, reject; otherwise, accept.
\item Pick $j \in [n]$ uniformly at random.
\item Perform the measurement $\{\proj{1} \otimes \Lambda_j,I-\proj{1} \otimes \Lambda_j\}$. If the first outcome is returned,  accept. Otherwise, retain the residual state of the two registers.
\end{enumerate}
\item Reject.
\end{enumerate}
}

We now prove the correctness of Algorithm \ref{alg:sequential} more formally.

\begin{thm}
\label{thm:prot}
In case 1, Algorithm \ref{alg:sequential} accepts with probability at least $\eta^2/7 - O(1/n)$. In case 2, Algorithm \ref{alg:sequential} accepts with probability at most $2 \lceil 5n/\eta+5/\eta^2 \rceil \zeta$.
\end{thm}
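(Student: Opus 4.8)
The plan is to track the two-register state throughout the loop and analyze the two cases separately, mirroring the intuition already given after Algorithm \ref{alg:sequential}. Write the state at the start of any iteration as $\alpha\ket{0}\ket{\psi} + \beta\ket{1}\ket{\widetilde\psi}$ (for pure $\rho = \proj\psi$; the mixed case follows at the end by a convexity argument, as in Theorem \ref{thm:mw}). The key quantity is the probability of accepting in a single iteration, which decomposes into the contribution from step 2a (the disturbance test, run with probability $1/(\eta n + 1)$) and the contribution from step 2c (the randomly chosen measurement $\Lambda_j$). First I would compute the acceptance probability of the Hadamard test in 2a: measuring the first qubit after a Hadamard accepts (outcome $1$) with probability $\tfrac12(1 - \Re(\alpha\bar\beta\ip{\psi}{\widetilde\psi}))$ up to normalization, so this term is large precisely when $\ket{\widetilde\psi}$ has drifted away from $\ket{\psi}$ or when the amplitude $\beta$ on the $\ket{1}$ branch has grown; this is the mechanism by which disturbance is detected.

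\textbf{Case 1 (accept).} Here the goal is a lower bound of $\eta^2/7 - O(1/n)$ on the total acceptance probability. I would argue that in each iteration \emph{one} of the two acceptance mechanisms must fire with reasonable probability. Concretely: if at some iteration the residual state $\ket{\widetilde\psi}$ on the $\ket{1}$ branch satisfies $\|\widetilde\psi - \psi\|_{\mathrm{tr}} \le \eta$ (so the branch is still close to $\rho$) and $|\beta|^2$ is not too small, then the case-1 hypothesis $\E_j[\tr\Lambda_j\phi] \ge \eta/n$ applied to $\phi = \widetilde\psi$ guarantees that step 2c accepts with probability $\Omega(\eta |\beta|^2/n)$. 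Otherwise $\ket{\widetilde\psi}$ is $\eta$-far from $\ket\psi$ or $|\beta|^2$ is small, and then the Hadamard test in 2a—when it is invoked, which happens with probability $1/(\eta n+1) = \Theta(1/(\eta n))$—accepts with probability $\Omega(\eta^2)$ or $\Omega(1)$ respectively. Balancing the two mechanisms, each iteration contributes $\Omega(\eta^2/n)$ to the acceptance probability, so over $k = \Theta(n/\eta + 1/\eta^2)$ iterations the total reaches the claimed $\Omega(\eta^2)$, with the $-O(1/n)$ accounting for the small probability that we run the disturbance test but happen to be in a benign configuration. The delicate bookkeeping is choosing the threshold separating the two regimes and verifying that the per-iteration bound survives summation without the amplitudes conspiring to evade both tests simultaneously.

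\textbf{Case 2 (reject).} This direction is cleaner and uses Lemma \ref{lem:quantumunion}. By hypothesis every $\Lambda_j$ accepts $\rho$ with probability at most $\zeta$, so each measurement $\{\proj{1}\otimes\Lambda_j,\,\cdot\,\}$ in step 2c accepts with probability at most $\zeta$; the Hadamard test in 2a can only be reached if no earlier $\Lambda_j$ has accepted. Treating the whole sequence of (at most $2k$, counting both 2a and 2c steps) projective measurements applied in order, the improved quantum union bound gives that the probability any of them accepts is at most $4 \cdot (\text{number of steps}) \cdot \zeta$; tightening the constant to match the loop structure yields the stated bound $2\lceil 5n/\eta + 5/\eta^2\rceil\zeta$.

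\textbf{Main obstacle.} I expect the hard part to be the Case 1 analysis, specifically establishing a uniform per-iteration acceptance lower bound despite the adversarial freedom in how the residual state $\ket{\widetilde\psi}$ and amplitude $\beta$ evolve across iterations. One must rule out a "slow leak" in which the state stays just barely close enough to $\psi$ to fool the disturbance test while $\E_j[\tr\Lambda_j\widetilde\psi]$ stays just below the threshold that would make 2c accept—the quantum anti-Zeno phenomenon from the introduction is exactly the kind of behavior that makes naive arguments fail here. Getting the constants to close (the factor $\eta^2/7$ and the choice $k = \lceil 5n/\eta + 5/\eta^2\rceil$) will require carefully coupling the two tests so that whatever the adversary does to weaken one, it strengthens the other.
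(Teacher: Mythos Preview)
Your plan matches the paper's approach almost exactly: same state parametrisation $\alpha\ket{0}\ket{\psi}+\beta\ket{1}\ket{\widetilde\psi}$, same disturbed/undisturbed dichotomy in Case~1, and the quantum union bound in Case~2. Two points need correcting before the argument closes.

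\textbf{Case 1: rejection is not the $O(1/n)$ term.} You write that the $-O(1/n)$ ``account[s] for the small probability that we run the disturbance test but happen to be in a benign configuration.'' That is not how the paper handles step~2a rejection, and it cannot be: the disturbance test is invoked with probability $q=1/(\eta n+1)$ per iteration, and over $k\approx 5n/\eta$ iterations the expected number of invocations is $\Theta(1/\eta^2)$, which is not small. If you simply sum the per-iteration acceptance lower bound $p=\eta^2/(5(\eta n+1))$ over $k$ rounds you get $\Omega(1)$, not $\Omega(\eta^2)$, so the naive summation overcounts. The paper instead treats each round as a three-way race---accept with probability $\ge p$, reject with probability $\le q$, continue otherwise---and computes the probability that acceptance occurs before rejection via the geometric sum $\sum_{i=0}^{k-1}(1-p)^i(1-q)^{i+1}p$, which evaluates to roughly $p/(p+q)=\eta^2/(5+\eta^2)$. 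That ratio is where the factor $\eta^2/7$ comes from; the $-O(1/n)$ is a separate artifact of the $+1$ in $\eta n+1$.

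\textbf{Case 2 bookkeeping.} Each loop iteration performs \emph{either} the 2a measurement \emph{or} one 2c measurement, so at most $k$ (not $2k$) measurements are made. On the initial state $\proj{+}\otimes\rho$ the 2a measurement accepts with probability exactly $0$ (Hadamard maps $\ket{+}\mapsto\ket{0}$), and the 2c measurement $\proj{1}\otimes\Lambda_j$ accepts with probability at most $\zeta/2$ because of the $\proj{1}$ against $\proj{+}$. Plugging these into Lemma~\ref{lem:quantumunion} gives $4k\cdot(\zeta/2)=2k\zeta$, which is the stated bound. Finally, the convexity remark for mixed $\rho$ is unnecessary: Case~1 explicitly assumes $\rho$ is pure.
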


\begin{proof}
First consider case 1, in which we would like the algorithm to accept. In this case we assume that $\rho$ is pure, so write $\rho = \proj{\psi}$. The algorithm accepts if and only if either the measurement in step 2a is made and the outcome is 1, or the first measurement outcome in step 2c is obtained. Call either of these a ``good'' measurement outcome.

The overall state of the algorithm at the start of the $i$'th step of the loop can be written as $\alpha_i \ket{0}\ket{\psi} + \beta_i \ket{1}\ket{\psi_i}$ for some normalised state $\ket{\psi_i}$ and some $\alpha_i,\beta_i \in \C$ such that $|\alpha_i|^2 + |\beta_i|^2=1$, with $\ket{\psi_1} = \ket{\psi}$ and $\alpha_1 = \beta_1 = 1/\sqrt{2}$. For any such state, the probability that the measurement in step 2a would return an outcome of 1, if it were made, is
\be \label{eq:ipbound} \frac{1}{2} \| \alpha_i \ket{\psi} - \beta_i \ket{\psi_i} \|^2 = \frac{1}{2}(1 - 2\Re(\alpha_i^* \beta_i \ip{\psi}{\psi_i})). \ee
We say that $\ket{\psi_i}$ is disturbed if $\| \psi_i - \psi \|_{\operatorname{tr}} \ge \eta$, and undisturbed otherwise. First assume that $\ket{\psi_i}$ is undisturbed. Then the probability that the measurement in step 2c is made and the first measurement outcome is obtained is
\be \left(1 - \frac{1}{\eta n + 1} \right) |\beta_i|^2 \E_j[\tr \Lambda_j \psi_i] \ge \left(1 - \frac{1}{\eta n + 1} \right) |\beta_i|^2 \frac{\eta}{n}  
\label{eq:ud-2c}\ee
as we are in case 1. By (\ref{eq:ipbound}), the probability that the measurement in step 2a is made and an outcome of 1 is obtained is lower-bounded by
\be \frac{1}{\eta n + 1} \left(1-2|\beta_i|\sqrt{1-|\beta_i|^2} |\ip{\psi}{\psi_i}| \right) \ge \frac{1}{\eta n + 1} \left(1-2|\beta_i|\sqrt{1-|\beta_i|^2} \right). 
\label{eq:ud-2a}\ee
If $|\beta_i| \ge 1/\sqrt{5}$, then \eq{ud-2c} is $\geq \eta^2 / (5(\eta n+1))$; if $|\beta_i| \le 1/\sqrt{5}$ then \eq{ud-2a} is $\geq 1/(5(\eta n + 1))$. As $\eta^2 \le 1$, the first bound is always lower, so the probability that a good measurement outcome is received if $\ket{\psi_i}$ is undisturbed obeys the overall lower bound of $\eta^2 / (5(\eta n+1))$.

On the other hand, if $\ket{\psi_i}$ is disturbed, by (\ref{eq:ipbound}) the probability that the measurement in step 2a is made and returns an outcome of 1 is lower-bounded by
\[ \frac{1}{\eta n + 1} \cdot \frac{1}{2}(1 - 2|\alpha_i||\beta_i| |\ip{\psi}{\psi_i}|) \ge \frac{1 - |\ip{\psi}{\psi_i}|}{2(\eta n+1)}  \ge \frac{1 - |\ip{\psi}{\psi_i}|^2}{4(\eta n+1)} \ge \frac{\eta^2}{4(\eta n + 1)}. \]
Therefore, at the $i$'th step of the loop, the probability of a good measurement outcome is at least $p := \eta^2 / (5(\eta n+1))$ whether or not $\ket{\psi_i}$ is disturbed. The probability that the protocol fails at any given step -- by incorrectly rejecting -- is at most $q := 1 / (\eta n + 1)$. So the probability that the protocol terminates with a good measurement outcome occurring before failure is lower-bounded by
%
\[ (1-q)p + (1-p)(1-q)^2p + (1-p)^2(1-q)^3 p + \dots + (1-p)^{k-1}(1-q)^k p = p(1-q)\left( \frac{1-(1-p)^k (1-q)^k}{1-(1-p)(1-q)}\right). \]
We have $(1-q)^k \le (1-p)^k \le e^{-pk} \le 1/e$, so the overall probability of success is lower-bounded by
%
\[ \left(1-\frac{1}{e^2}\right) \frac{p-pq}{p+q-pq} = \left(1-\frac{1}{e^2}\right) \frac{\eta^3 n}{5 \eta n + \eta^3 n + 5} \ge \frac{\eta^2}{7} - O\left(\frac{1}{n}\right).\]
%
Now consider case 2. By assumption, $\Lambda_j$ accepts $\rho$ with probability at most $\zeta$ for all $j$, so the measurement operator $\proj{1} \otimes \Lambda_j$ accepts the starting state $\proj{+} \otimes \rho$ with probability at most $\zeta/2$. Also, the measurement in step 2a would reject the starting state with certainty. At most $k$ measurements are made in the algorithm. By the quantum union bound (Lemma \ref{lem:quantumunion}), the probability that any measurement made in the algorithm leads to acceptance is upper-bounded by $2k\zeta$.
\end{proof}

An alternative protocol is to start with two copies of the input state, perform the measurements $M_j$ on only the first copy, and test disturbance between the two copies using the swap test~\cite{buhrman01}. This would avoid the need for controlled measurements, but would require an additional copy of the input state.

Given Theorem \ref{thm:prot}, it is easy to show variants of Corollary \ref{cor:seqmeasproptest} and Corollary \ref{cor:demerlin} with slightly worse constants, which imply the rest of the results in the paper.


\section{Classical lower bound for testing linear isomorphism}
\label{app:linisobound}

\begin{prop}
There is a universal constant $\epsilon > 0$ such that any classical $\epsilon$-tester for linear isomorphism of two unknown boolean functions must make $\Omega(2^{n/2})$ queries.
\end{prop}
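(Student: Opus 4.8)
The plan is to prove the bound via Yao's minimax principle: it suffices to exhibit a distribution over YES-instances (pairs that are linearly isomorphic) and a distribution over NO-instances (pairs that are $\epsilon$-far from linearly isomorphic) such that no deterministic algorithm making $q = o(2^{n/2})$ queries can tell them apart with constant advantage. For the YES distribution $\mathcal{D}_{\mathrm{yes}}$ I would draw $f\colon\{0,1\}^n \to \{0,1\}$ uniformly at random, draw $A \in GL_n(\F_2)$ uniformly at random, and set $g = f \circ A$; such a pair is linearly isomorphic by construction. For the NO distribution $\mathcal{D}_{\mathrm{no}}$ I would draw $f$ uniformly, set $g(0) = f(0)$, and let $g(x)$ for $x \ne 0$ be independent uniform bits independent of $f$. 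The reason to force $g(0) = f(0)$ is that $0$ is a fixed point of every linear map, so $g(0) = f(0)$ necessarily holds in every YES-instance; without this correlation in $\mathcal{D}_{\mathrm{no}}$, the two-query test ``accept iff $f(0) = g(0)$'' would already distinguish the distributions, and matching them at $0$ removes this degenerate advantage.

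The first routine step is to check that $\mathcal{D}_{\mathrm{no}}$ is supported, with probability $1 - o(1)$, on pairs that are $\epsilon$-far from linearly isomorphic for a fixed constant (say $\epsilon = 1/4$). For any fixed $A \in GL_n(\F_2)$, the functions $g$ and $f \circ A$ agree at $0$ and, at each of the $2^n - 1$ nonzero points, disagree independently with probability $1/2$ (since $g(x)$ is an independent uniform bit while $f(Ax)$ is the value of $f$ at a distinct nonzero point). A Chernoff bound gives disagreement fraction below $\epsilon$ with probability $\exp(-\Omega(2^n))$, and a union bound over $|GL_n(\F_2)| < 2^{n^2}$ matrices still leaves failure probability $o(1)$ because $2^n \gg n^2$. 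Any instances violating the far-ness condition can be absorbed into the error budget.

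The heart of the argument is indistinguishability, and this is where I expect the main difficulty. I would couple both experiments to a single ``simulator'' that answers every fresh query with an independent uniform bit, using one shared bit for the common value at $0$; this simulator reproduces $\mathcal{D}_{\mathrm{no}}$ exactly, and crucially its answers are independent of $A$, so the (at most $q$) queried $f$-points and $g$-points are random variables independent of $A$. Declare a \emph{collision} to occur if some nonzero queried $g$-point $y$ and nonzero queried $f$-point $x$ satisfy $Ay = x$. Since $GL_n(\F_2)$ acts transitively on $\F_2^n \setminus \{0\}$, for each fixed nonzero $y$ the vector $Ay$ is uniform over the $2^n - 1$ nonzero vectors, so each of the at most $q^2$ candidate pairs collides with probability $1/(2^n - 1)$, and a union bound gives collision probability $O(q^2 / 2^n)$. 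The key claim, proved by an ``identical-until-collision'' coupling, is that conditioned on no collision the transcript has exactly the same law under $\mathcal{D}_{\mathrm{yes}}$ as under the simulator: in the YES world $g(y) = f(Ay)$, and absent a collision the effective evaluation points $\{x_i\} \cup \{Ay_j\}$ are all distinct except for the single forced coincidence at $0$, so all answers are independent uniform bits with $g(0) = f(0)$, matching $\mathcal{D}_{\mathrm{no}}$. Hence the total variation distance between the two transcript distributions is $O(q^2 / 2^n)$, which is $o(1)$ whenever $q = o(2^{n/2})$, so no algorithm achieves constant advantage and $\Omega(2^{n/2})$ queries are required.

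The main obstacle, as flagged, is making the indistinguishability coupling fully rigorous in the adaptive setting: one must argue that the algorithm's query points really are independent of $A$ under the simulator (so that the transitivity computation applies), and carefully track the degenerate $0$-point throughout the coupling so that the YES and NO transcripts agree bit-for-bit until a collision occurs. This birthday-bound collision analysis is precisely the linear-algebraic analogue of the classical $\Omega(2^{n/2})$ lower bound for Simon's problem, where finding the hidden shift likewise requires detecting a collision among $\Theta(2^{n/2})$ samples.
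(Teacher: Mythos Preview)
Your proposal is correct and follows essentially the same approach as the paper: the same YES/NO distributions (with the $g(0)=f(0)$ fix), the same Chernoff-plus-union-bound argument that random pairs are $\epsilon$-far, and the same birthday collision bound $O(q^2/2^n)$ via transitivity of $GL_n(\F_2)$ on nonzero vectors. The paper's write-up is terser about the coupling and the adaptivity point you flag, but the argument is the same one you outline, including the explicit analogy to Simon's problem.
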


\begin{proof}
The proof is very similar to the lower bound on the property-testing variant of Simon's problem~\cite{buhrman08}. By the Yao principle, it is sufficient to bound the success probability of deterministic algorithms which distinguish between the following two distributions:
\begin{itemize}
\item $\mathcal{D}_{\text{yes}}$: $f:\{0,1\}^n \rightarrow \{0,1\}$ is picked uniformly at random, $A \in GL_n(\F_2)$ is picked uniformly at random, and $g$ is defined by $g(x) = f(Ax)$.
\item $\cD_{\text{no}}$: 
$f,g:\{0,1\}^n \rightarrow \{0,1\}$ are each picked uniformly at random, conditioned on $g(0^n) = f(0^n)$ and
\be \forall A\in GL_n(\F_2)\quad |\{x : g(x) = f(Ax)\}| \leq (1-\eps) 2^n.
\label{eq:eps-far}\ee
\end{itemize}
To be precise, the algorithm is given an input picked from the distribution $\mathcal{D} = \frac{1}{2}\left( \mathcal{D}_{\text{yes}} + \mathcal{D}_{\text{no}} \right)$, and is asked to determine whether it was picked from $\mathcal{D}_{\text{yes}}$ or $\mathcal{D}_{\text{no}}$. 
Since $\cD_{\text{no}}$ is not easy to analyze, we first argue that it is close to a much simpler distribution.  Define $\cD_{\text{rand}}$ to be the uniform distribution over $f,g:\{0,1\}^n\ra \{0,1\}$ subject only to the constraint that $g(0^n)=f(0^n)$.   We claim that $\cD_{\text{rand}}$ satisfies \eq{eps-far} with high probability, which will imply that $\cD_{\text{rand}}$ and $\cD_{\text{no}}$ are close in variational distance.   Indeed, fix a choice of $f$ and $A$.  Let $f\circ A$ denote the function $x\mapsto f(Ax)$.  Then the probability that a random $g$ agrees with $f\circ A$ in a $\geq 1-\eps$ fraction of positions is $\approx 2^{-(1-H_2(\eps))2^n}$ where $H_2(\eps) = -\eps\log(\eps) - (1-\eps)\log(1-\eps)$.  Since there are $\leq 2^{n^2}$ choices of $A$, the probability that a random $g$ fails to satisfy \eq{eps-far} is at most
\[ \exp\left(n^2 - 2^n (1-H_2(\eps))\right).\]
For $\eps < 1/2$ and sufficiently large $n$ this probability is negligible.  We now proceed as though the input were chosen from the distribution
$\frac{1}{2}\left( \mathcal{D}_{\text{yes}} + \mathcal{D}_{\text{rand}} \right)$.

Now consider any deterministic decision tree which queries positions in $f$ and $g$, without loss of generality querying a distinct position at each step. The values $f(0^n)$ and $g(0^n)$ give no useful information for the algorithm to distinguish between $\mathcal{D}_{\text{yes}}$ and $\mathcal{D}_{\text{rand}}$, so we can assume that they are never queried. Thus, in a ``no'' instance, the response to queries is always uniformly random. In a ``yes'' instance, the response to a new query, say, $g(x')$ is uniformly random unless there exists $x$ such that $f(x)$ has been previously queried and $x' = Ax$. If every query by the algorithm receives a uniformly random response, the algorithm cannot distinguish this from a ``no'' instance.

For any sequence of $k$ previous queries, the probability (over the random choice of $A$) that the next query corresponds to a pair $x' = Ax$ of this form is at most $k / (2^n - 1)$ by a union bound. Therefore, the probability that an algorithm making $k$ queries has found such a pair at any point in its execution is $O(k^2 / 2^n)$. So, to achieve success probability $2/3$, it is necessary to make $k = \Omega(2^{n/2})$ queries.
\end{proof}


\bibliographystyle{plain}
\bibliography{disturbing}

\end{document}